\providecommand{\algorithmname}{Algorithm}
\numberwithin{equation}{section}
\numberwithin{figure}{section}
\theoremstyle{plain}
\newtheorem{thm}{\protect\theoremname}
\theoremstyle{remark}
\newtheorem{rem}[thm]{\protect\remarkname}
\theoremstyle{definition}
\newtheorem{example}[thm]{\protect\examplename}
\theoremstyle{plain}
\newtheorem{lem}[thm]{\protect\lemmaname}
\theoremstyle{plain}
\newtheorem{cor}[thm]{\protect\corollaryname}
\def\blfootnote{\xdef\@thefnmark{}\@footnotetext}
\providecommand{\corollaryname}{Corollary}
\providecommand{\examplename}{Example}
\providecommand{\lemmaname}{Lemma}
\providecommand{\remarkname}{Remark}
\providecommand{\theoremname}{Theorem}
\begin{document}

\title{Dimension Reduction for Origin-Destination Flow Estimation:\\
 Blind Estimation Made Possible}

\author{Jingyuan Xia, Wei Dai, John Polak, and Michel Bierlaire }
\begin{abstract}
This paper studies the problem of estimating origin-destination (OD)
flows from link flows. As the number of link flows is typically much
less than that of OD flows, the inverse problem is severely ill-posed
and hence prior information is required to recover the ground truth.
The basic approach in the literature relies on a forward model where
the so called traffic assignment matrix maps OD flows to link flows.
Due to the ill-posedness of the problem, prior information on the
assignment matrix and OD flows are typically needed. 

The main contributions of this paper include a dimension reduction
of the inquired flows from $O\left(n^{2}\right)$ to $O\left(n\right)$,
and a demonstration that for the first time the ground truth OD flows
can be uniquely identified with no or little prior information. To
cope with the ill-posedness due to the large number of unknowns, a
new forward model is developed which does not involve OD flows directly
but is built upon the flows characterized only by their origins, henceforth
referred as \emph{O-flows}. The new model preserves all the OD information
and more importantly reduces the dimension of the inverse problem
substantially. A Gauss-Seidel method is deployed to solve the inverse
problem, and a necessary condition for the uniqueness of the solution
is proved. Simulations demonstrate that blind estimation where no
prior information is available is possible for some network settings.
Some challenging network settings are identified and discussed, where
a remedy based on temporal patterns of the O-flows is developed and
numerically shown effective.
\end{abstract} 

\maketitle
\blfootnote{
	Jingyuan Xia and Wei Dai <j.xia16 and wei.dai1@imperial.ac.uk>, Department of Electric and Electronic Engineering, Imperial college London, London, SW7 2AZ, UK.
	
	John Polak <j.polak@imperial.ac.uk>, Department of Civil and Environmental Engineering, Imperial college London, London, SW7 2AZ, UK.
	
	Michel Bierlaire <michel.bierlaire@epfl.ch>, Transport and Mobility Laboratory, Ecole Polytechnique Fédérale de Lausanne, Lausanne, Switzerland.
	
	The authors would thank Mr. Maxime L Ferreira Da Costa at Imperial College London for the discussions during the initial development of the technical approach, and Dr. Gunna Flotterod at Royal Institute of Technology for the discussions and information on the literature in OD estimation.}

\section{\label{sec:Introduction}Introduction}

The origin-destination (OD) flow estimation problem can be stated
as follows. Consider a network $\mathcal{G}\left(\mathcal{N},\mathcal{L}\right)$
specified by the set of nodes $\mathcal{N}$ (also known as vertices)
and links $\mathcal{L}$ (also called edges). Suppose that the quantities
of link flows are given during the time horizon involving multiple
consecutive sampling time intervals. The task is to estimate the volume
of the OD flows.

OD flow estimation is essential to many network analysis tasks. This
paper focuses on transportation networks though the developed approach
can be extended to other types of networks. In the transportation
community, it has been widely accepted that OD information reflects
the travel demands and plays an essential role in long term infrastructure
planning, traffic prediction under unexpected changes in the infrastructure
or the traffic status, and commercial applications linked to population
migration \cite{de2011modelling}, and serves as an important input
to traffic simulation models \cite{toledo2003calibration,bauer2017quasi}.
Despite its importance, accurate OD information is difficult, expensive,
and sometimes impossible to be obtained. The old fashioned household
survey data are expensive and time-consuming to collect and typically
incomplete and biased \cite{willumsen1978estimation,cascetta1984estimation}.
Commercial transportation service data from taxi and Uber are highly
biased towards commercial activities. Recently modern technologies,
for instance GPS, mobile phone, automatic plate recognition systems,
automatic vehicle identification systems, and combinations of these
systems, provide new data sources and new opportunities for acquiring
OD information \cite{antoniou2011synthesis,parry2012estimation,jin2014location,alexander2015origin,moreira2016time,yang2017origin}.
However, the data are highly privacy-sensitive and sometimes error-prone.

The problem of interest is to infer OD flows from the link flows.
It is ill-posed as the number of link flows (observations) is typically
much less than that of OD flows (unknowns). Let $n_{n}$ be the number
of the nodes in the network and $n_{\ell}$ be the number of the links.
Let $n_{\ell}=cn_{n}$ where $c\in\mathbb{R}^{+}$ denotes the average
number of links per node. In a transportation network, it is typical
that $2\le c\le4$. At the same time, the number of OD flows can be
as large as $n_{n}^{2}$. Even when restricting the distance between
the origin and destination nodes, the number of possible OD flows
can be still much larger than that of link flows. The inverse problem
typically does not admit a unique solution.

In the literature, a common approach is based on a linear forward
model that maps OD flows to link flows. Refer to the linear operator
that maps OD flows to link flows as traffic assignment \cite{willumsen1978estimation,cascetta1984estimation,de2011modelling}
(also known as link choice proportions \cite{lo1999decomposition,lo1996estimation},
or path proportions \cite{frederix2013dynamic,cao2000scalable}).
It gives the fraction of an OD flow passing through a specific link.
The assignment matrix based approach requires prior information on
both the assignment matrix and the OD flows: it is typically assumed
that either the assignment matrix or its approximation is given based
on historical data or traffic modeling; prior information on the inquired
OD flows includes historical data, statistical models, temporal and/or
spatial relationship among the flows, etc. Beyond the linear model,
nonlinear models such as user equilibrium model \cite{bar2010traffic,shen2012new,frederix2013dynamic,lu2013dynamic}
and network loading model \cite{balakrishna2007offline,balakrishna2008incorporating,lu2013dynamic}
have been deployed to describe the complex relationships between OD
flows and link flows. Prior knowledge in terms of domain knowledge
or historical data is also needed in addressing the ill-posedness
of the OD flow estimation problem. 

In this paper, the fundamental question of interest is whether the
ill-posedness of the inverse problem can be addressed with no or little
prior information, i.e., whether blind estimation is possible. This
paper adopts the assignment matrix based linear model for simplicity
and the purpose of proof-of-concept. We assume that the traffic assignment
matrix is unknown but fixed during the whole time horizon. The OD
flows are dynamic, meaning that the OD flows vary across different
sampling time intervals. 

The main contributions of this paper are summarized in the following. 
\begin{itemize}
\item A new linear model is developed to allow substantial dimension reduction
of the inquired flows from $O\left(n^{2}\right)$ to $O\left(n\right)$.
More specifically, define a flow originating from a node as an O-flow.
A linear model is constructed to map O-flows to link flows. With a
slight abuse of terminology, refer to the corresponding linear operator
also as traffic assignment matrix. (The two different assignment matrices,
one corresponding to OD flows and the other O-flows, can be distinguished
according to the context.) The O-flows together with the corresponding
assignment matrix preserve the OD flow information. At the same time,
the number of O-flows is at most the number of the nodes in the network,
which is typically less than the number of links and much less than
the number of OD flows. 
\item It is numerically demonstrated that for the first time the ground
truth OD flows can be uniquely identified without any prior information
of either the flows or the traffic assignment. In this paper, the
OD flow information is inferred by jointly estimating both the O-flows
and the corresponding assignment matrix. An iterative algorithm is
developed to solve the joint estimation problem based on the Gauss-Seidel
method. Simulations in Section \ref{sec:Simulations} show that the
ground truth OD flows can be estimated with high accuracy for bidirectional
networks. In one tested scenario, 1660 OD flows are accurately estimated
from 224 link flows without any prior information. According to the
authors' knowledge, no similar result has been reported before in
the literature. 
\item A necessary condition is derived for the uniqueness of the solution
of the O-flow model. Based on this necessary condition, we show that
in general unidirectional networks do not admit unique solutions.
A remedy is proposed to promote a unique solution by assuming temporal
patterns in O-flows, in particular that the coefficients of the discrete
cosine transform (DCT) of O-flows have only a few significant components.
Numerical simulations have demonstrated the effectiveness of this
remedy. Here, the only prior information used is that O-flows are
sparse in the DCT transform domain. There is no need for historical
data, or the knowledge which DCT coefficients are nonzero. 
\end{itemize}
In summary, blind estimation is made possible thanks to the substantial
dimension reduction of the new linear model. 

As a starting point, this paper focuses on a simple linear model involving
a static assignment matrix. In reality, the relationship between OD
flows and link flows is nonlinear according to the fundamental diagram
of traffic flow\footnote{One way to handle nonlinearity is to use local linear approximations
derived from Taylor series. Due to the fundamental nature of linear
models, they are the focus of this paper.}. Even in the local linear approximation regime, the assignment matrix
can be dynamic in real life situations. Our new model and approach
can be adapted and extended to address more complicated scenarios,
which we leave as future work. 

This paper is organized as follows. Section \ref{sec:Models} reviews
popular models and techniques in the literature briefly. Section \ref{sec:NewModel}
introduces our linear model used for OD flow estimation. Section \ref{sec:Blind-Estimation}
describes the computational procedure, derives a necessary condition
for the uniqueness of the solution, and analyzes unidirectional networks.
Simulation results are presented in Section \ref{sec:Simulations}
to demonstrate the feasibility for blind estimation. Conclusions and
possible future work are given in Section \ref{sec:Conclusion}.

\section{\label{sec:Models}Models for OD Estimation in literature}

Directed graphs $\mathcal{G}\left(\mathcal{N},\mathcal{L}\right)$
are considered in this paper. Denote the traffic flow cont of the
link from a node $i$ to its adjacent node $j$ by $y_{ij}$. Similarly
the traffic flow count from an origin node $o$ to a destination node
$d$ is denoted by $s_{od}$. Assume that the network has $n_{n}$
many nodes, $n_{\ell}$ many links, and $n_{{\rm OD}}$ many OD pairs. 

\subsection{\label{subsec:Forward-Models}Forward Models}

Group the link flows $y_{ij}$ into a vector $\bm{y}$. It is clear
that $\bm{y}\in\bar{\mathbb{R}}^{n_{\ell}}$ where $\bar{\mathbb{R}}:=\mathbb{R}^{+}\bigcup\left\{ 0\right\} $
is the set of non-negative real numbers\footnote{For modeling and computational simplicity, we relax the domain of
traffic counts from $\mathbb{Z}^{+}\bigcup\left\{ 0\right\} $ to
$\mathbb{R}^{+}\bigcup\left\{ 0\right\} $.}. Let $\bm{s}\in\bar{\mathbb{R}}^{n_{{\rm OD}}}$ denote the OD flow
vector. The most popular linear model in the literature \cite{willumsen1981simplified,cascetta1984estimation,bell1991estimation,hazelton2001inference,bierlaire2004efficient,xie2010maximum}
is given by 
\begin{equation}
\bm{y}=\bm{A}\bm{s},\label{eq:OD2Link}
\end{equation}
where the matrix $\bm{A}$ is referred to as traffic assignment matrix,
and its entries $a_{ij,od}$ gives the fraction of OD flow $s_{od}$
passing the link from $i$ to $j$. 

Finer time resolution can be added to the above model. Refer to the
time window under which the link flow data are collected as sampling
time interval, and the time period of all the consecutive sampling
time intervals as time horizon. Model (\ref{eq:OD2Link}) implicitly
assumes that all the OD flows finish in one sampling time interval,
which is reasonable in real life when the sampling time interval is
sufficiently long, for example, a day long. With finer time resolution,
for example the sampling time interval is of ten minutes long, multiple
sampling time intervals may be involved during the lifetime of OD
flows. The linear relations from OD flows to link flows are then described
by a convolution form: 
\begin{align}
 & \bm{y}^{t}=\sum_{\tau=1}^{\tau_{\max}}\bm{A}^{\tau}\bm{s}^{t-\tau+1},\;\mbox{or equivalently }\bm{y}^{t}=\bm{A}^{t}*\bm{s}^{t},\label{eq:OD2Link-MultiStep}
\end{align}
where $\bm{y}^{t},$ $\bm{A}^{t}$, and $\bm{s}^{t}$ are the link
flow, traffic assignment matrix, and the OD flow at time interval
$t$ respectively, $\tau_{\max}\in\mathbb{Z}^{+}$ is the maximum
number of consecutive sampling time intervals involved for OD trips,
and the symbol $*$ denotes a convolution which is commonly used in
signal processing community. Refer to this model as multi-step model.
On one hand, it provides finer time resolution. On the other hand,
it involves more unknown variables and more prior information is needed
to solve the inverse problem. 

It is noteworthy that both linear models (\ref{eq:OD2Link}) and (\ref{eq:OD2Link-MultiStep})
assume that the traffic assignment matrix $\bm{A}$ is independent
of the OD flows $\bm{s}$. This model is often referred to as a separable
model in the literature. In real life scenarios, the assignment matrix
$\bm{A}$ and the OD flows are correlated (non-separable) according
to the fundamental diagram of traffic flow. A non-separable model
results in a nonlinear model, e.g., user equilibrium model \cite{willumsen1981simplified,bar2010traffic,shen2012new,frederix2013dynamic,lu2013dynamic},
and goes beyond the scope of this paper. 

\subsection{\label{subsec:OD-Estimation-Literature}OD Estimation: Solving the
Inverse Problem}

The inverse problem of OD estimation is severely ill-posed because
the number of OD flows (unknown variables) can be much larger than
that of observed link flows. To address this issue, prior information
on both the assignment matrix and the OD flows is necessary. Based
on different assumptions about the prior information, different techniques
have been developed to solve the inverse problem. In this following,
we shall discuss some representative approaches. As OD flow estimation
has been an active research topic for many decades, we can only include
a small subset of the literature below. 

In the works \cite{willumsen1978estimation,willumsen1981simplified},
gravity models and entropy maximizing principle have been used to
choose one solution (of the inverse problem) from all feasible ones.
The basic form of gravity models is to approximate the OD flow $s_{od}$
by 
\[
s_{od}=bR_{o}R_{d}c_{od}^{-r},
\]
where $b$ and $r$ are parameters for calibration, $R_{o}$ and $R_{d}$
represent information (e.g. population, employment, the mean income
of the residence) of the origin and the destination respectively,
and $c_{od}$ is the cost of traveling from $o$ to $d$. In entropy
maximizing models \cite{willumsen1978estimation,willumsen1981simplified,lam1991estimation,xie2010maximum},
among all feasible solutions satisfying $\bm{y}=\bm{A}\bm{s}$, the
one that maximizes the entropy function $-\sum_{o,d}\left(s_{od}\ln s_{od}-s_{od}\right)$
is of interest. A variation of this is given by the information minimizing
model \cite{willumsen1981simplified,lam1991estimation}, of which
the solution has a very similar form to that of the entropy maximizing
model. 

When historical data of the OD flows are available, a popular approach
is the Generalized Least Squares (GLS) estimator. Assume that both
the link flow measurement errors and OD flow approximation errors
can be modeled by using multivariate normal distribution. The GLS
estimator is given by \cite{cascetta1984estimation,lo1996estimation,lo1999decomposition}

\begin{equation}
\begin{array}{c}
\min_{\bm{s}}\;\left(\bm{A}\bm{s}-\bm{y}\right)^{T}\bm{W}^{-1}\left(\bm{A}\bm{s}-\bm{y}\right)\end{array}+\left(\bm{s}-\bm{s}^{H}\right)^{T}\bm{V}^{-1}\left(\bm{s}-\bm{s}^{H}\right),\label{eq:GLS OD estimator old}
\end{equation}
where $\bm{W}$ and $\bm{V}$ are the covariance matrices and assumed
to be known a priori. The multivariate normal distribution may result
in negative values of OD flows. One way to address this is to add
a constraint that the inquired OD flows must be non-negative \cite{bell1991estimation}.
Another way is to replace the Gaussian distribution with the Poisson
distribution \cite{cascetta1988unified,lo1996estimation,lo1999decomposition,tebaldi1998bayesian}
which is non-negative, describes the traffic behavior better, but
is computationally more costly. As a fundamental framework, GLS has
been adopted and adapted in many other works, e.g., \cite{menon2015fine,yang2017origin,tympakianaki2018robust}. 

To design and manage modern intelligent transportation systems, it
is important to model the dynamic nature of the OD flows where the
sampling time interval is much less than a day. The dynamics of the
OD flows further increase the number of unknowns and calls for extra
prior information to cope with the ill-posedness. In \cite{ashok2000alternative,bierlaire2004efficient},
the dynamics of the OD flows is modeled by an auto-regressive process.
Let $\partial\bm{s}^{t}=\bm{s}^{t}-\bm{s}^{t,H}$ be the deviations
of OD flows $\bm{s}^{t}$ from the historical data $\bm{s}^{t,H}$.
The auto-regressive model assumes 
\[
\partial\bm{s}^{t}=\sum_{\tau=1}^{q}\bm{B}^{\tau}\partial\bm{s}^{t-\tau}+\bm{w}^{t},
\]
where $\bm{B}^{\tau}\in\mathbb{R}^{n_{OD}\times n_{OD}}$ represents
the linear contribution of $\partial\bm{s}^{t-\tau}$ to $\partial\bm{s}^{t}$
and are assumed to be known a priori, and $\bm{w}_{t}$ denotes the
errors. The multi-step linear model (\ref{eq:OD2Link-MultiStep})
can be equivalently written as 
\[
\partial\bm{y}^{t}=\sum_{\tau=1}^{\tau_{\max}}\bm{A}^{\tau}\partial\bm{s}^{t-\tau+1}+\bm{v}^{t},
\]
where $\partial\bm{y}^{t}=\bm{y}^{t}-\sum_{\tau=1}^{\tau_{\max}}\bm{A}^{\tau}\bm{s}^{t-\tau+1,H}$
and $\bm{v}_{t}$ describes the errors. Assume that $\bm{w}^{t}$
and $\bm{v}^{t}$ are multivariate normal distributed with mean zero
and covariance matrices $\bm{W}$ and $\bm{V}$ respectively. A GLS
estimator similar to (\ref{eq:GLS OD estimator old}) can be then
applied. Based on the auto-regressive modeling, a principal component
analysis has been applied to $\bm{s}^{t}$ for the purpose of further
dimension reduction \cite{prakash2017reducing}. Another way to explore
the temporal correlations of OD flows is to pool identical time periods
over days from the same day category \cite{bauer2017quasi}. 

Spatial structures of the OD flows can be also used to mitigate the
ill-posedness. In stead of considering traffic flows between any pair
of nodes, one can simplify the network and the analysis by aggregating
nodes into zones represented by virtual nodes, connecting them by
virtual links, and analyzing the traffic flows between pairs of the
virtual nodes. This conventional traffic modeling has been adopted
by many works, e.g. \cite{lo1996estimation,lo1999decomposition,cipriani2011gradient},
to cope with limited data and/or computational power. It is typically
left as decisions for domain experts to decide the zone construction,
including the number of zones, the position of the virtual nodes,
and the virtual links connecting virtual nodes. In recent work \cite{menon2015fine},
two techniques are used to reduce the spatial complexity of the network:
automatic zoning and sparsity regularization. Instead of grouping
multiple nodes, the aim of automatic zoning is to select individual
nodes as centers of traffic analysis zones in order to find a Pareto
optimal point for the bi-criteria objective 
\[
\underset{\bm{b}\in\left\{ 0,1\right\} ^{n_{n}}}{\min}\;\left\{ \left\Vert \bm{b}\right\Vert _{0},\left\Vert \bm{y}-\bm{A}^{\left(\bm{b}\right)}\bm{s}^{\left(\bm{b}\right)}\right\Vert _{2}^{2}\right\} ,
\]
where the entries of $\bm{b}$ indicate which nodes are chosen and
which are not, the pseudo-norm $\left\Vert \cdot\right\Vert _{0}$
counts the number of nonzero elements, and $\bm{A}^{\left(\bm{b}\right)}$
and $\bm{s}^{\left(\bm{b}\right)}$ represent truncation of $\bm{A}$
and $\bm{s}$, respectively, based on the nonzero elements of $\bm{b}$.
Under some assumptions, this bi-criteria objective leads to a constrained
nonconvex optimization formulation for automatic zoning. A heuristic
algorithm is also developed. After automatic zoning, the authors further
assume that ``for all but the coarsest of zonings'', OD flows should
be sparse in the sense that most OD flows are so small to be safely
approximated by zero. An $\ell_{1}$-regularization term is added
to the GLS estimator to promote the sparsity of the estimated OD flows,
resulting in 
\[
\underset{\bm{s}\ge\bm{0}}{\min}\;\left(\bm{y}-\bm{A}\bm{s}\right)^{T}\bm{W}^{-1}\left(\bm{y}-\bm{A}\bm{s}\right)+\lambda\left\Vert \bm{s}\right\Vert _{1},
\]
where $\bm{A}$ and $\bm{s}$ are the assignment matrix and OD flows
after zoning, and $\lambda\in\mathbb{R}^{+}$ is a properly chosen
constant to balance data fidelity and solution sparsity. 

It can be observed that all the approaches discussed above heavily
rely on the assumption that either the assignment matrix or its approximation
is known a priori. In reality, such information has to come from somewhere.
The simplest approach is the so called ``all-or-nothing'' assignment
\cite[p. 153, and references therein]{menon2015fine} where only one
path is for one OD trip and the path is chosen to be the one with
the least travel cost (distance or average travel time). Similarly,
one can also allow multiple paths for OD trips and assign a probability
for these paths based on either historical data or a cost measure.
The difficulties \cite{willumsen1981simplified} include, but are
not limited to, the availability/sufficiency of historical data, the
cost measure to choose, possibly different perceptions and objectives
from different drivers in different situations, imperfect knowledge
of the alternative routes, and the dynamics of the assignment matrix
under different road/traffic conditions. The last difficulty can be
addressed by local linear approximation of the assignment matrix \cite{toledo2013estimation,cantelmo2014adaptive}
or a user equilibrium based assignment. However, these techniques
result in a nonlinear model that relies on knowledge of the OD flows,
and hence goes beyond the scope of this paper. 

\section{\label{sec:NewModel}A New Forward Model for OD Flow Estimation }

In the light of the above, the main technical difficulty of using
the standard linear models (\ref{eq:OD2Link},\ref{eq:OD2Link-MultiStep})
for OD flow estimation comes from the large dimension of the unknowns.
The focus of this section is to present a new forward model which
reduces the dimension of the inquired flows from $O\left(n_{n}^{2}\right)$
to $O\left(n_{n}\right)$. 

\subsection{\label{subsec:O-flow-Models}O-flow Based Models}

We build a linear forward model not directly involving the inquired
OD flows. It is built on the traffic flows that are specified with
their origins but not their destinations, henceforth referred to as
\emph{O-flow}. Let $x_{o}$ denote the flow originating from the node
$o$, i.e., $x_{o}=\sum_{d\ne o}s_{od}$. Define the O-flow vector
as $\bm{x}=\left[x_{1},\cdots,x_{o},\cdots,x_{n_{O}}\right]^{T}$,
where $n_{O}$ is the number of valid origins in the network. Let
$p_{ij,o}$ denote the proportion of the O-flow $x_{o}$ that passes
the link from $i$ to $j$. Denote the traffic assignment matrix associated
with O-flows $\bm{x}$ by $\bm{P}$ of which the entries are $p_{ij,o}$.
Assume static traffic assignment and dynamic flows. When the sampling
time interval is longer than the trip time, one has the following
single step model 
\begin{equation}
\bm{y}^{t}=\bm{P}\bm{x}^{t},\quad t=1,2,\cdots,n_{T},\label{eq:O2Link-SingleStep}
\end{equation}
where $n_{T}\in\mathbb{Z}^{+}$ denotes the number of sampling time
intervals involved in the time horizon. Otherwise, one has a multi-step
model where 
\begin{equation}
\bm{y}^{t}=\sum_{\tau=1}^{\tau_{\max}}\bm{P}^{\tau}\bm{x}^{t-\tau+1},\;\mbox{{\rm or }}\bm{y}^{t}=\bm{P}^{t}*\bm{x}^{t},\quad t=1,2,\cdots,n_{T},\label{eq:O2Link-MultiStep}
\end{equation}
where $\tau_{\max}$ is the maximum trip time. 
\begin{rem}
In the sequel, we will sometimes use single step models for illustrate
simplicity. The simulations are based on the multi-step model (\ref{eq:O2Link-MultiStep}). 
\end{rem}

It is clear that the number of O-flows $n_{O}$ is upper bounded by
$n_{n}$. In typical transportation networks, the average number of
links per node, denoted by $c\in\mathbb{R}^{+}$, is larger than 1.
In this case, the number of equations $n_{\ell}=cn_{n}$ is more than
the number of unknown O-flows. The inverse problem is then well-posed
when the assignment matrix is given and of full column rank (which
is possible only when $n_{\ell}\ge n_{O}$). 

The new model does not involve OD flows directly but preserves the
OD flow information. 
\begin{thm}
\label{thm:OD-info-preserved}All the OD flows can be inferred from
the given O-flows and the corresponding assignment matrices. 
\end{thm}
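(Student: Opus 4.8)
The plan is to exploit \emph{conservation of flow}. Although the matrix $\bm{P}$ has been aggregated over destinations, it still records, separately for each origin $o$, how the corresponding O-flow is routed link by link; a node balance at any potential destination then recovers the portion of that O-flow absorbed there, which is precisely the OD flow. This will in fact produce an explicit formula for $s_{od}$ in terms of the data, which is stronger than mere inferability.

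Concretely, fix a valid origin $o$ (so $x_o>0$; an origin with $x_o=0$ contributes no OD flow and the proportions $p_{\cdot\cdot,o}$ are vacuous). For each link $(i,j)\in\mathcal{L}$ set the origin-$o$ link volume $f^o_{ij}:=x_o\,p_{ij,o}$, which is determined by the given data. By the definition of the traffic assignment $\bm{P}$, the numbers $\{f^o_{ij}\}$ are the volumes contributed to the links by traffic emanating from $o$; since every such unit of traffic travels from $o$ to some destination $d\neq o$ and is absorbed there, $\{f^o_{ij}\}$ is a flow on $\mathcal{G}(\mathcal{N},\mathcal{L})$ with a single source at $o$ of strength $x_o=\sum_{d\neq o}s_{od}$ and a sink at each destination node $d$ of strength $s_{od}$. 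Equivalently, the net outflow (outgoing minus incoming) at each node satisfies
\begin{equation}
\sum_{k:(v,k)\in\mathcal{L}} f^o_{vk}-\sum_{i:(i,v)\in\mathcal{L}} f^o_{iv}=
\begin{cases}
x_o, & v=o,\\
-s_{ov}, & v\neq o,
\end{cases}
\label{eq:ODinfo-balance}
\end{equation}
cyclic components of the routing, if present, contributing zero to each side. Rearranging the case $v=d\neq o$ of (\ref{eq:ODinfo-balance}) yields
\begin{equation}
s_{od}=x_o\Bigl(\sum_{i:(i,d)\in\mathcal{L}}p_{id,o}-\sum_{k:(d,k)\in\mathcal{L}}p_{dk,o}\Bigr),
\label{eq:ODinfo-readoff}
\end{equation}
whose right-hand side involves only the given $\bm{x}$ and $\bm{P}$. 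Doing this for every origin $o$ reconstructs the entire OD matrix $\{s_{od}\}$, hence $\bm{s}$; the case $v=o$ of (\ref{eq:ODinfo-balance}) is then automatically satisfied, since summing (\ref{eq:ODinfo-balance}) over all $v\in\mathcal{N}$ gives $0=0$.

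The step I expect to be the main obstacle is making the conservation statement (\ref{eq:ODinfo-balance}) rigorous rather than merely intuitive: one must fix precisely what ``traffic assignment'' means here, namely that $\bm{P}$ arises from a path-based decomposition of each O-flow into $o$-to-$d$ routes, and then dispose of the degenerate cases (destinations with no incoming or no outgoing links, and routings that contain cycles) so that the node balance holds verbatim. A secondary point is the dynamic multi-step model (\ref{eq:O2Link-MultiStep}): there the analogue of (\ref{eq:ODinfo-balance}) must be summed over the trip-time offsets $\tau=1,\dots,\tau_{\max}$, with care that each unit of O-flow is counted exactly once across sampling intervals; once that bookkeeping is in place, the conservation argument and the read-off formula (\ref{eq:ODinfo-readoff}) carry over unchanged.
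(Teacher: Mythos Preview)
Your proposal is correct and takes essentially the same approach as the paper: both arrive at the explicit read-off formula $s_{od}=x_o\bigl(\sum_i p_{id,o}-\sum_j p_{dj,o}\bigr)$ by interpreting $x_o p_{ij,o}$ as the origin-$o$ contribution to link $ij$ and applying a node balance (inflow minus outflow) at $d$, with the multi-step version obtained by summing over $\tau$. Your write-up is somewhat more careful about the underlying flow-conservation justification and the edge cases, but the idea and the resulting formula are identical to the paper's.
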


\begin{proof}
In the single step model (\ref{eq:O2Link-SingleStep}), the OD flow
$s_{od}$, $o\ne d$, can be calculated from 
\begin{equation}
s_{od}=x_{o}\left(\sum_{i}p_{id,o}-\sum_{j}p_{dj,o}\right),\label{eq:OFlow2ODFlow-single}
\end{equation}
where the term $x_{o}\sum_{i}p_{id,o}$ calculates the inflow to the
node $d$ that originates from the node $o$, the term $x_{o}\sum_{j}p_{dj,o}$
gives the outflow from the node $d$ that originates from the node
$o$, and the difference between them is clearly the flow ending at
the node $d$ and originating from the node $o$.

Similar arguments can be applied to the multi-step model (\ref{eq:O2Link-SingleStep}),
resulting in
\begin{equation}
s_{od}^{t}=x_{o}^{t}\left(\sum_{\tau=1}^{\tau_{\max}}\sum_{i}p_{id,o}^{\tau}-\sum_{\tau=1}^{\tau_{\max}}\sum_{j}p_{dj,o}^{\tau}\right).\label{eq:OFlow2ODFlow-multiple}
\end{equation}
\end{proof}
\vspace{0cm}

\begin{rem}
The concept of O-flows has been mentioned in the literature, e.g.,
\cite{bauer2017quasi}. However, when coming to OD flow estimation,
none of existing works builds the inverse problem on O-flows.
\end{rem}

\vspace{0cm}

\begin{rem}
A model similar to (\ref{eq:O2Link-SingleStep}) can be constructed
based on the D-flow as well (D stands for destination) which also
allows for dimensional reduction and the preservation of OD flow information.
More specifically, let $x_{d}$ be the flows ending at the node $d$,
and $p_{ij,d}$ be the proportion of $x_{d}$ passing the link from
$i$ to $j$. Then OD flows can be computed via 
\[
s_{od}=x_{d}\left(\sum_{j}p_{oj,d}-\sum_{i}p_{io,d}\right)
\]
for single step model, and 
\begin{equation}
s_{od}^{t}=\sum_{\tau=1}^{\tau_{\max}}x_{d}^{t+\tau}\left(\sum_{j}p_{oj,d}^{\tau}-\sum_{i}p_{io,d}^{\tau}\right)\label{eq:DFlow2ODFlow-multiple}
\end{equation}
for multi-step model. The slight difference between (\ref{eq:DFlow2ODFlow-multiple})
and (\ref{eq:OFlow2ODFlow-multiple}) comes from the definition of
$s_{od}^{t}$ which describes the OD flow $s_{od}$ \emph{starting}
at the time interval $t$. The models based on O-flow and D-flow are
interchangeable. This paper focuses on the O-flow model only. 
\end{rem}

\vspace{0cm}

\subsection{\label{subsec:Lost-Information}Connections and Differences of Models}

On one hand, our model preserves the OD flow information via (\ref{eq:OFlow2ODFlow-single},\ref{eq:OFlow2ODFlow-multiple}).
On the other hand, the number of unknown variables (including those
in the assignment matrices and the flows) in our new model are substantially
reduced from the standard model (\ref{eq:OD2Link},\ref{eq:OD2Link-MultiStep}).
An educated instinct is that some information\footnote{This paper does not assume particular statistical models for the flows.
The term ``information'' here is not referred to as Shannon entropy
type of information.} must get lost by this dimension reduction. To characterize the exact
information got lost and to understand its importance, we need to
study the relationship among three different models built upon path
flows, OD flows, and O-flows, respectively. For simplicity of discussion,
we focus on a single snapshot of single step models.

Many papers \cite{BIERLAIRE2002TotalDemandScale,bierlaire2004efficient,bauer2017quasi}
in the literature use path flow models. For a given path $\bm{p}$
specified by $o\rightarrow i\rightarrow\cdots\rightarrow d$, denote
the flow along this particular path by $s_{\bm{p}}^{{\rm path}}$.
Group all path flows into a vector to form path flow vector $\bm{s}^{{\rm path}}$.
Then the observed link flows are given by 
\[
\bm{y}=\bm{A}^{{\rm path}}\bm{s}^{{\rm path}},
\]
where $\bm{A}^{{\rm path}}$ is the path incidence matrix where 
\[
A_{ij,\bm{p}}^{{\rm path}}=\begin{cases}
1 & \text{ if the path }\bm{p}\text{ involves the link }ij,\\
0 & \text{ otherwise.}
\end{cases}
\]

\begin{figure}

\begin{centering}
\includegraphics[scale=0.3]{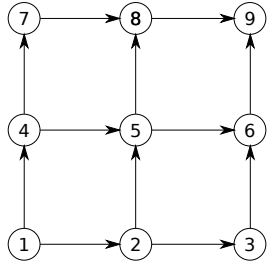} 
\par\end{centering}
\caption{\label{fig:3-by-3-network-unidirection}An example of 3-by-3 unidirectional
network. }

\end{figure}

The path flow model contains more information than the OD flow model
does. Let $\mathcal{P}_{od}$ be the set of all paths with origin
$o$ and destination $d$. It is straightforward to verify that 
\begin{align}
 & s_{od}=\sum_{\bm{p}\in\mathcal{P}_{od}}s_{\bm{p}}^{{\rm path}},\;{\rm and}\nonumber \\
 & a_{ij,od}=\frac{\sum_{\bm{p}\in\mathcal{P}_{od}}a_{ij,\bm{p}}^{{\rm path}}s_{\bm{p}}^{{\rm path}}}{\sum_{\bm{p}\in\mathcal{P}_{od}}s_{\bm{p}}^{{\rm path}}}.\label{eq:PathTrafficAssignment2ODTrafficAssignment}
\end{align}
However, the OD flow model does not fully characterize the path flow
model in general. Specifically, we use the following example to illustrate
that the lost information is the path selection information when multiple
paths associated with the same OD share the same subset of links. 
\begin{example}
\label{exa:InfLoss-Path2OD}In the unidirectional network in Figure
\ref{fig:3-by-3-network-unidirection}, consider the four different
paths $\bm{p}_{1}=1\rightarrow2\rightarrow5\rightarrow6\rightarrow9$,
$\bm{p}_{2}=1\rightarrow2\rightarrow5\rightarrow8\rightarrow9$, $\bm{p}_{3}=1\rightarrow4\rightarrow5\rightarrow6\rightarrow9$,
and $\bm{p}_{4}=1\rightarrow4\rightarrow5\rightarrow8\rightarrow9$
which correspond the same OD trip from 1 to 9. Suppose that the observed
link flows on all involved links are $c_{l}$ with $c_{l}>0$. The
corresponding OD flow model is unique: $s_{19}=2c_{l}$ and $a_{ij,19}=\frac{1}{2}$
for all involved links. However, the consistent path flow model is
not unique: that $s_{\bm{p}_{1}}^{{\rm path}}=s_{\bm{p}_{4}}^{{\rm path}}=c_{l}$
and $s_{\bm{p}_{2}}^{{\rm path}}=s_{\bm{p}_{3}}^{{\rm path}}=0$,
that $s_{\bm{p}_{1}}^{{\rm path}}=s_{\bm{p}_{4}}^{{\rm path}}=0$
and $s_{\bm{p}_{2}}^{{\rm path}}=s_{\bm{p}_{3}}^{{\rm path}}=c_{l}$,
and any convex combination of these two cases can give the exactly
same link flows. 
\end{example}

\vspace{0cm}

Similarly the OD flow model contains more information than the O-flow
model does. The O-flow model can be derived from the OD flow model
via 
\begin{align}
 & x_{o}=\sum_{d\ne o}s_{od},\;{\rm and}\nonumber \\
 & p_{ij,o}=\frac{\sum_{d\ne o}a_{ij,od}s_{od}}{\sum_{d\ne o}s_{od}},\label{eq:ODTrafficAssignment2OTrafficAssignment}
\end{align}
but not vice versa. An example is given below to show that the path
selection information gets lost in the shared part of the paths of
different OD flows. 
\begin{example}
\label{exa:InfLoss-OD2OFlow}Consider the same unidirectional network
in Figure \ref{fig:3-by-3-network-unidirection}. Consider the OD
flows $s_{18}$ and $s_{16}$. Further assume that the OD trip from
1 to 8 only has two possible paths $\bm{p}_{1}=1\rightarrow2\rightarrow5\rightarrow8$
and $\bm{p}_{2}=1\rightarrow4\rightarrow5\rightarrow8$, and that
the OD trip from 1 to 6 only has two possible paths $\bm{p}_{3}=1\rightarrow2\rightarrow5\rightarrow6$
and $\bm{p}_{4}=1\rightarrow4\rightarrow5\rightarrow6$. Suppose that
the observed link flows of all involved links are $c_{l}$ with $c_{l}>0$.
This uniquely identifies the O-flow model: $x_{1}=2c_{l}$ and $p_{ij,1}=\frac{1}{2}$
for all involved links. However, the consistent OD flow model is not
unique: while the OD flows are uniquely given by $s_{18}=s_{16}=c_{l}$,
the choices of the assignment matrix are not unique; that $s_{18}=s_{\bm{p}_{1}}^{{\rm path}}$
($a_{12,18}=1$ and $a_{14,18}=0$) and $s_{16}=s_{\bm{p}_{4}}^{{\rm path}}$
($a_{12,16}=0$ and $a_{14,16}=1$), that $s_{18}=s_{\bm{p}_{2}}^{{\rm path}}$
($a_{12,18}=0$ and $a_{14,18}=1$) and $s_{16}=s_{\bm{p}_{3}}^{{\rm path}}$
($a_{12,16}=1$ and $a_{14,16}=0$), and any convex combination of
these two cases give the exactly same link flows. 
\end{example}

\section{\label{sec:Blind-Estimation} Estimation of OD flows}

The substantial reduction of dimension in our new forward model has
the potential to turn the ill-posedness of the OD estimation problem
to well-posedness. This section is devoted to a joint estimation of
both O-flows and the assignment matrix, which leads to an estimation
of OD flows based on Theorem \ref{thm:OD-info-preserved}. 

\subsection{Joint Estimation }

Let $\bm{y}^{t}$, $t=1,2,\cdots,n_{T}$, be the measured link flows,
where $n_{T}$ denotes the time horizon. Assume that $\tau_{\max}<n_{T}$.
Then the joint estimation problem can be formulated as 
\[
\underset{\bm{P},\;\bm{x}^{t}:\;1\le t\le n_{T}}{\min}\;f_{{\rm cost}}\left(\bm{P},\left\{ \bm{x}^{t}\right\} \right)
\]
for the single step model and 
\[
\underset{\begin{array}{c}
\bm{P}^{t}:\;1\le t\le\tau_{\max}\\
\bm{x}^{t}:\;2-\tau_{\max}\le t\le n_{T}
\end{array}}{\min}\;f_{{\rm cost}}\left(\left\{ \bm{P}^{t}\right\} ,\left\{ \bm{x}^{t}\right\} \right)
\]
for the multi-step model. The $f_{{\rm cost}}$ relies on the underlying
assumption on noise and typically is chosen to be a convex function
for computational convenience. In this paper, we adopt the most commonly
used one, the squared $\ell_{2}$-norm, for $f_{{\rm cost}}$, resulting
in

\begin{equation}
\underset{\begin{array}{c}
\bm{P},\;\left\{ \bm{x}^{t}\right\} \end{array}}{\min}\;\sum_{t=1}^{n_{T}}\left\Vert \bm{y}^{t}-\bm{P}\bm{x}^{t}\right\Vert _{2}^{2}\label{eq:Joint-Estimation_single}
\end{equation}
for the single step model, and 
\begin{equation}
\underset{\begin{array}{c}
\left\{ \bm{P}^{t}\right\} ,\;\left\{ \bm{x}^{t}\right\} \end{array}}{\min}\;\sum_{t=1}^{n_{T}}\left\Vert \bm{y}^{t}-\sum_{\tau=1}^{\tau_{\max}}\bm{P}^{\tau}\bm{x}^{t-\tau+1}\right\Vert _{2}^{2}\label{eq:Joint-Estimation_multiple}
\end{equation}
for the multi-step model.

The following constraints imposed by the physics and network topology
should be also considered. 
\begin{enumerate}
\item[C1:] Non-negativity of flows. That is, 
\[
\bm{x}^{t}\ge\bm{0},\;\forall t.
\]
\item[C2:] Probability constraint on assignment matrix. By the definition of
traffic assignment, 
\[
0\leq p_{ij,o}^{t}\leq1
\]
for multi-step models, and 
\[
0\leq p_{ij,o}\leq1
\]
for single step models. 
\item[C3:] Observability constraint. This means that all the traffic originated
from a node is observed. That is, for multi-step models one has 
\[
\sum_{\tau=1}^{\tau_{\max}}\sum_{i\ne o}p_{oi,o}^{\tau}=1,\;\forall o,
\]
and for single step models it holds that 
\[
\sum_{i\ne o}p_{oi,o}=1,\;\forall o.
\]
This constraint is consistent with the principles of flow conservation
mentioned in \cite{lo1996estimation}. 
\item[C4:] \label{enu:speed-constraint}Speed constraint.

For multi-step models, one has that $p_{ij,o}^{t}=0$ if the link
$ij$ is not involved in the $t$-th step of the O-flow $x_{o}$.
For example, under the rigid multi-step model (see Remark \ref{rem:rigid-model}
for more details), $p_{oi,o}^{t}\ge0$ for $t=1$, $p_{oi,o}^{t}=0\;{\rm for}\;t>1$,
and $p_{ij,o}^{t}=0\;{\rm for}\;t=1,\;\forall o,\;\forall i\ne o$.
The complete set of speed constraint depends on the network topology
and the speed of the traffic. Application of the speed constraint
typically makes the assignment matrices sparse especially when the
sampling time interval is small compared to maximum trip time.

For single step models, it holds that $p_{ij,o}=0$ when a link $ij$
is not involved in any trips from $o$. The assignment matrix is sparse
when the maximum trip time is much smaller than the diameter of the
graph. 
\item[C5:] \label{enu:Flow-constraint}Flow constraint. Consider the flow originated
from any node $o$. For any given node $i\ne o$, its inflow must
be larger than or equal to its outflow. For multi-step models, one
has 
\[
\sum_{\tau=1}^{\tau_{\max}}\sum_{j}p_{ji,o}^{\tau}-\sum_{\tau=t}^{\tau_{\max}}\sum_{k}p_{ik,o}^{\tau}\geq0,\;\forall t,\;\forall o,\;{\rm and}\;\forall i\ne o,
\]
which can be simplified into 
\[
\sum_{j}p_{ji,o}^{t-1}-\sum_{k}p_{ik,o}^{t}\geq0,\;\forall t,\;\forall o,\;{\rm and}\;\forall i\ne o.
\]
for rigid models (Remark \ref{rem:rigid-model}). For single step
models, it holds that 
\[
\sum_{j}p_{ji,o}-\sum_{k}p_{ik,o}\geq0,\;\forall o,\;{\rm and}\;\forall i\ne o.
\]
\end{enumerate}
\begin{rem}
\label{rem:rigid-model}At this point, it is worth to distinguish
two traffic models: rigid and elastic traffic models respectively.
In the rigid multi-step model, all links are of the same length, and
in one unit time all the unfinished traffic flows move forward across
exactly one link. The second assumption is equivalent to say that
all the traffic flows have constant and identical speeds. In the elastic
model, the constraints on both link length and traffic speed are removed.
The elastic model fits actual systems better. In the simulation part
of this paper (Section \ref{sec:Simulations}), we adopt the rigid
model for the simplicity of modeling and simulations. 
\end{rem}

\vspace{0.01cm}

The problem (\ref{eq:Joint-Estimation_multiple}) is known as a bilinear
inverse problem \cite{davenport2016overview}. The name comes from
the fact that when one fixes either $\left\{ \bm{P}^{t}\right\} $
or $\left\{ \bm{x}^{t}\right\} $ and solves for the other, the inverse
problem becomes a linear inverse problem. Generally speaking, bilinear
inverse problems are non-convex and does not admit a unique solution. 
\begin{rem}[Extensions to Accommodate Prior Information]
The formulation (\ref{eq:Joint-Estimation_multiple}) can be extended
to accommodate prior information by adding extra terms into the objective
function. For example, when approximations of $\left\{ \bm{P}^{t}\right\} $
and $\left\{ \bm{x}^{t}\right\} $, denoted by $\left\{ \tilde{\bm{P}}^{t}\right\} $
and $\left\{ \tilde{\bm{x}}^{t}\right\} $, are available from historical
data \cite{menon2015fine,toledo2013estimation,parry2012estimation},
one may formulate the estimation problem via a form similar to the
GLS formulation (\ref{eq:GLS OD estimator old}): 
\begin{align*}
\min_{\left\{ \bm{P}^{t}\right\} ,\;\left\{ \bm{x}^{t}\right\} }\; & \sum_{t=1}^{n_{T}}\left\Vert \bm{y}^{t}-\sum_{\tau=1}^{\tau_{\max}}\bm{P}^{\tau}\bm{x}^{t-\tau+1}\right\Vert _{2}^{2}\\
 & +\sum_{t=1}^{\tau_{\max}}{\rm vect}\left(\bm{P}^{t}-\tilde{\bm{P}}^{t}\right)^{T}\bm{W}_{P}^{-1}{\rm vect}\left(\bm{P}^{t}-\tilde{\bm{P}}^{t}\right)+\sum_{t=2-\tau_{\max}}^{n_{T}}\left(\bm{x}^{t}-\tilde{\bm{x}}^{t}\right)^{T}\bm{W}_{X}^{-1}\left(\bm{x}^{t}-\tilde{\bm{x}}^{t}\right),
\end{align*}
where ${\rm vect}\left(\cdot\right)$ denotes the vector formed by
stacking the columns of the input matrix, and $\bm{W}_{P}$ and $\bm{W}_{X}$
are the covariance matrices of the error terms.
\end{rem}

We present the Gauss-Seidel method for OD estimation based on multi-step
models. It is an alternative minimization approach where in each iteration
either $\left\{ \bm{P}^{t}\right\} $ or $\left\{ \bm{x}^{t}\right\} $
is fixed and the minimization is with respect to the other variable.
See Algorithm \ref{alg:AlternativeOpt} for a high level description. 

\begin{algorithm}
\caption{Alternative minimization for Joint Estimation\label{alg:AlternativeOpt}}

\begin{algorithmic}[1]

\renewcommand{\algorithmicrequire}{\textbf{Input:}} 
\renewcommand{\algorithmicensure}{\textbf{Output:}}
\Require the graph $\mathcal{G}\left(\mathcal{N},\mathcal{L}\right)$ and link flows $\bm{y}^t$, $t=1,2,\cdots,n_{T}$
\Ensure $\bm{P}^{t}$, $t=1,2,\cdots,\tau_{\max}$, and $\bm{x}^{t}$, $t=1,2,\cdots,n_{T}$
\renewcommand{\algorithmicensure}{\textbf{Initialization:}}
\Ensure Based on $\mathcal{G}\left(\mathcal{N},\mathcal{L}\right)$ randomly initialize $\left\{ \bm{P}^{t}\right\}$ subject to Constraints C2-C5.
\While{stop criteria are not satisfied,}
\State With fixed $\left\{ \bm{P}^{t}\right\}$, update $\left\{ \bm{x}^{t}\right\}$ to minimize $f_{{\rm cost}}$ subject to Constraint C1.
\State With fixed $\left\{ \bm{x}^{t}\right\}$, update $\left\{ \bm{P}^{t}\right\}$ to minimize $f_{{\rm cost}}$ subject to Constraints C2-C5.
\EndWhile
\end{algorithmic}
\end{algorithm}

\begin{rem}
\label{rem:time-horizon-Oflows}A subtle point is the time intervals
of the output $\bm{x}^{t}$. From the definition of the convolution,
$\bm{x}^{t},\;2-\tau_{\max}\le t\le n_{T}$, is involved in producing
$\bm{y}^{t},\;1\le t\le n_{T}$. However, it is clear that with given
information $\bm{y}^{t},\;1\le t\le n_{T}$, it is impossible to uniquely
recover $\bm{x}^{t}$ for $2-\tau_{\max}\le t\le0$ in general. As
a consequence, the outputs only involve $\bm{x}^{t}$ for $1\le t\le n_{T}$. 
\end{rem}

\subsection{\label{subsec:Uniqueness-Discussion}Uniqueness of Solution}

In the discussion of the uniqueness, we assume the noise free case
where $\bm{y}^{t}=\bm{P}^{t}*\bm{x}^{t}$. Denote the estimated assignment
matrices and O-flows by $\left\{ \hat{\bm{P}}^{t}\right\} $ and $\left\{ \hat{\bm{x}}^{t}\right\} $,
respectively. Feasible solutions satisfy $\bm{y}^{t}=\hat{\bm{P}}^{t}*\hat{\bm{x}}^{t}$.
The uniqueness of the solution implies that the estimated $\left\{ \hat{\bm{P}}^{t}\right\} $
and $\left\{ \hat{\bm{x}}^{t}\right\} $ are the same as the ground
truth.

The solution of a general bilinear inverse problem is not unique.
If $\left\{ \hat{\bm{P}}^{t}\right\} $ and $\left\{ \hat{\bm{x}}^{t}\right\} $
are a solution of $\bm{y}^{t}=\bm{P}^{t}*\bm{x}^{t}$, $1\le t\le n_{T}$,
then so are $\left\{ \hat{\bm{P}}^{t}\bm{M}\right\} $ and $\left\{ \bm{M}^{-1}\hat{\bm{x}}^{t}\right\} $
for arbitrary invertible matrix $\bm{M}$. In our problem, the equality
constrains imposed by Constraints C3 and C4 help avoid such ambiguity. 

\subsubsection{\label{subsec:Uniqueness-of-Bidirectional-network}A Necessary Condition
for the Uniqueness of Solution }

The following theorem states a necessary condition for the uniqueness
of solution of Equations (\ref{eq:O2Link-SingleStep}) and (\ref{eq:O2Link-MultiStep}).
Let $n_{O}$ be the number of valid origin nodes (in bidirectional
networks $n_{O}=n_{n}$), which gives the number of equations obtained
from the observability constraint C3. Let $n_{4}$ denote the number
of equations obtained from the speed constraint C4. The values of
$n_{O}$ and $n_{4}$ depend on the network topology. 
\begin{thm}
\label{thm:Uniqueness}If the solution of the single step model (\ref{eq:O2Link-SingleStep})
is unique, then it holds that 
\begin{equation}
n_{\ell}n_{T}+n_{O}+n_{4}\ge n_{\ell}n_{O}+n_{O}n_{T},\label{eq:necessary-cond-single-1}
\end{equation}
or equivalently 
\begin{equation}
n_{T}\ge\frac{\left(n_{\ell}-1\right)n_{O}-n_{4}}{n_{\ell}-n_{O}}.\label{eq:necessary-cond-single-2}
\end{equation}
That the solution of the multi-step model (\ref{eq:O2Link-MultiStep})
is unique implies that 
\begin{equation}
n_{\ell}n_{T}+n_{O}+n_{4}\ge\tau_{\max}n_{\ell}n_{O}+n_{O}n_{T},\label{eq:necessary-cond-multistep-1}
\end{equation}
or equivalently 
\begin{equation}
n_{T}\ge\frac{\left(\tau_{\max}n_{\ell}-1\right)n_{O}-n_{4}}{n_{\ell}-n_{O}}.\label{eq:necessary-cond-multistep-2}
\end{equation}
\end{thm}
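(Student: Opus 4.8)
The plan is a degrees-of-freedom count on the feasible set, proving the contrapositive. For the single step model (\ref{eq:O2Link-SingleStep}) the unknowns are the $n_\ell n_O$ entries of $\bm{P}\in\bar{\mathbb{R}}^{n_\ell\times n_O}$ together with the $n_O n_T$ entries of $\bm{x}^1,\dots,\bm{x}^{n_T}$, for a total of $n_\ell n_O+n_O n_T$ real parameters. Of the five constraints, only the equalities trim the dimension of the admissible parameter set: the observability constraint C3 supplies $n_O$ affine equations and the speed constraint C4 supplies $n_4$, whereas C1, C2 and C5 are inequalities and carve out a full-dimensional region. Hence the set of parameters satisfying C1--C5 has dimension at least $d:=n_\ell n_O+n_O n_T-n_O-n_4$; it can drop by at most $n_O+n_4$, independently of whether the C3 and C4 equations happen to be linearly independent.

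Next, regard the forward operator $\Phi:(\bm{P},\{\bm{x}^t\})\mapsto\{\bm{P}\bm{x}^t\}_{t=1}^{n_T}$ as a bilinear, hence polynomial, map into $\mathbb{R}^{n_\ell n_T}$, restricted to the admissible set. The solutions consistent with the measured link flows $\{\bm{y}^t\}$ are exactly the fiber $\Phi^{-1}(\{\bm{y}^t\})$, which contains the ground truth. Since the domain has dimension at least $d$ and $\Phi$ is polynomial, no fiber can have dimension smaller than $d-n_\ell n_T$; so if $d>n_\ell n_T$, the fiber through the ground truth is positive-dimensional and the solution is not unique. Therefore uniqueness forces $d\le n_\ell n_T$, i.e. $n_\ell n_O+n_O n_T-n_O-n_4\le n_\ell n_T$, which is (\ref{eq:necessary-cond-single-1}); collecting the $n_O n_T$ term and dividing by $n_\ell-n_O>0$ yields the equivalent threshold (\ref{eq:necessary-cond-single-2}).

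For the multi-step model (\ref{eq:O2Link-MultiStep}) the argument is verbatim, with updated counts. By Remark \ref{rem:time-horizon-Oflows} the recoverable O-flows are $\bm{x}^1,\dots,\bm{x}^{n_T}$ ($n_O n_T$ parameters), while the assignment is now the family $\bm{P}^1,\dots,\bm{P}^{\tau_{\max}}$ ($\tau_{\max}n_\ell n_O$ parameters); C3 still gives $n_O$ equalities, now of the form $\sum_{\tau=1}^{\tau_{\max}}\sum_{i\ne o}p_{oi,o}^{\tau}=1$, C4 still gives $n_4$, and the convolution $\bm{y}^t=\sum_{\tau}\bm{P}^\tau\bm{x}^{t-\tau+1}$ yields $n_\ell n_T$ scalar equations. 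The same fiber-dimension bound then forces $\tau_{\max}n_\ell n_O+n_O n_T-n_O-n_4\le n_\ell n_T$, i.e. (\ref{eq:necessary-cond-multistep-1}), which rearranges into (\ref{eq:necessary-cond-multistep-2}).

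The step that needs care — and the one I expect to be the main obstacle — is the implication ``$d>n_\ell n_T$, hence the feasible fiber is infinite''. The clean fiber-dimension statement is really a fact about varieties over an algebraically closed field, whereas here the variables are real and non-negative, so in principle the complex fiber could meet the feasible orthant in the single ground-truth point. To make the implication rigorous I would linearize at the ground truth: if that point were isolated among feasible solutions and lay in the interior of the region cut out by the inequalities C1, C2 and C5 (a generic situation), then the differential there of the stacked map formed by the $n_\ell n_T$ bilinear observation equations and the $n_O+n_4$ affine equations of C3 and C4 would have to be injective on the parameter space, hence of rank equal to the number of parameters; but that rank is bounded by the number of output coordinates, $n_\ell n_T+n_O+n_4$, which is exactly the asserted inequality. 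Verifying that isolatedness indeed forces an injective differential in the presence of active inequality constraints, and handling boundary cases where some ground-truth flow or assignment entry vanishes, is the delicate part of the argument.
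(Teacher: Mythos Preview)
Your proposal takes essentially the same approach as the paper: a degrees-of-freedom count comparing the number of equations ($n_\ell n_T$ observations plus the $n_O+n_4$ equality constraints from C3 and C4) against the number of unknowns in $\bm{P}$ (or $\{\bm{P}^\tau\}$) and $\{\bm{x}^t\}$. The paper's own proof is a single sentence invoking exactly this counting heuristic and does not address the subtleties you flag about real-versus-complex fibers, the Jacobian rank at the ground truth, or active inequality constraints, so your version is already more careful than the original.
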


\begin{proof}
The proof is based on the algebraic argument that the solutions cannot
be unique if the number of independent equations is less than the
number of unknown variables. The left hand sides of Inequalities  (\ref{eq:necessary-cond-single-1})
and (\ref{eq:necessary-cond-multistep-1}) are the total number of
equations, and the right hand sides of Inequalities (\ref{eq:necessary-cond-single-1})
and (\ref{eq:necessary-cond-multistep-1}) are the number of unknowns
for single step models and multi-step models, respectively. 
\end{proof}
\vspace{0cm}

To estimate an $n_{T}$ necessary for the uniqueness of solution,
a relaxed and hence less accurate version can be derived as follows.
Let $c=n_{\ell}/n_{O}$ be the average number of links per node. We
relax the right hand sides of Inequalities (\ref{eq:necessary-cond-single-2})
and (\ref{eq:necessary-cond-multistep-2}) by ignoring the terms $-n_{O}$
and $-n_{4}$ in the numerators. Then the useful rules of thumb for
practice can be obtained where 
\begin{equation}
n_{T}\gtrapprox\frac{c}{c-1}n_{O}\label{eq:necessray-cond-single-simple}
\end{equation}
for single step models, and 
\begin{equation}
n_{T}\gtrapprox\frac{\tau_{\max}c}{c-1}n_{O}\label{eq:necessary-cond-multistep-simple}
\end{equation}
for multi-step models.

\subsubsection{\label{subsec:Uniqueness-Unidirectional-Networks}Non-uniqueness
of Solutions for Unidirectional Networks and A Remedy}

In the following, we shall use the necessary condition to show that
in general unidirectional networks do not admit a unique solution.
We start the analysis by studying a three node unidirectional network.
For simplicity, we focus on the rigid model mentioned in Remark \ref{rem:rigid-model}
when considering multi-step models. 

\begin{wrapfigure}{o}{0.5\columnwidth}%

\begin{centering}
\includegraphics[scale=0.3]{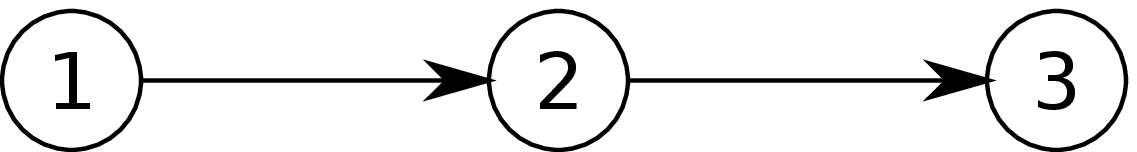} 
\par\end{centering}
\caption{\label{fig:3-node-unidirectional-network}A three node unidirectional
network}

\end{wrapfigure}%

\begin{lem}
\label{lem:nonuniqueness-3-node-unidirectional-network}The three
node unidirectional network in Figure \ref{fig:3-node-unidirectional-network}
does not admit a unique solution in general. (The case that admits
a unique solution is detailed in the proof.) 
\end{lem}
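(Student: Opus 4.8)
The plan is to instantiate the O-flow forward model on this small network, impose the physical constraints C1--C5, and then exhibit an explicit one-parameter family of feasible solutions that all reproduce the same link flows; the statement's ``unique'' case will be exactly the configuration in which that parameter interval shrinks to a single point. First I would read the topology off Figure~\ref{fig:3-node-unidirectional-network}: the valid origins are the two nodes with outgoing links, say $1$ and $2$, so $n_{O}=2$, node $3$ is a pure sink, and $n_{\ell}=3$ with links $(1,2),(1,3),(2,3)$. Using the observability constraint C3 and the speed constraint C4, I would reduce the $3\times 2$ assignment matrix $\bm{P}$ to its genuinely free entries: the proportion $\alpha:=p_{12,1}$ of the O-flow $x_{1}$ sent toward node $2$ (which forces $p_{13,1}=1-\alpha$), and the proportion $\beta:=p_{23,1}$ of $x_{1}$ continuing from node $2$ on to node $3$; the flow constraint C5 at node $2$ forces $0\le\beta\le\alpha\le 1$, while $p_{23,2}=1$ and $p_{12,2}=p_{13,2}=0$.

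I would then write the link-flow equations of the single-step model~(\ref{eq:O2Link-SingleStep}), $y_{12}^{t}=\alpha x_{1}^{t}$, $y_{13}^{t}=(1-\alpha)x_{1}^{t}$, and $y_{23}^{t}=\beta x_{1}^{t}+x_{2}^{t}$ for $t=1,\dots,n_{T}$. The first two, together with $p_{12,1}+p_{13,1}=1$, recover $x_{1}^{t}=y_{12}^{t}+y_{13}^{t}$ and $\alpha=y_{12}^{t}/x_{1}^{t}$ uniquely whenever $x_{1}^{t}>0$. The third is a single scalar equation per interval in the two unknowns $\beta$ and $x_{2}^{t}$; stacked over $t$ it gives $n_{T}$ equations in $n_{T}+1$ unknowns, underdetermined by exactly one degree of freedom. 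To make the ambiguity explicit I would observe that for any $\beta'\in[0,\alpha]$ with $\beta' x_{1}^{t}\le y_{23}^{t}$ for all $t$, replacing $p_{23,1}$ by $\beta'$ and setting $\tilde{x}_{2}^{t}:=y_{23}^{t}-\beta' x_{1}^{t}\ge 0$ yields a solution that satisfies all of C1--C5 and produces the identical link flows $\{\bm{y}^{t}\}$. Since the ground-truth $\beta$ lies in this interval and the interval has positive length for generic data, the solution is not unique. For the multi-step rigid model of Remark~\ref{rem:rigid-model} the same argument applies up to a one-step time shift on the $x_{1}$ term (and the boundary caveat of Remark~\ref{rem:time-horizon-Oflows}); link $(2,3)$ remains the only source of ambiguity.

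Finally I would pin down the exceptional case: the interval of admissible $\beta'$ collapses to $\{0\}$ --- forcing $\beta=0$, $x_{2}^{t}=y_{23}^{t}$ and $\alpha,x_{1}^{t}$ as above, hence a unique solution --- precisely when $\alpha=0$ (the O-flow $x_{1}$ never routes through node $2$) or when some sampling interval satisfies $y_{23}^{t}=0<x_{1}^{t}$ (the shared link carries no flow while $x_{1}$ does not, which pins $\beta=0$); this is the case referred to in the statement. The step I expect to carry the real content is not the algebra but identifying why the obstruction survives: the crude equation count of Theorem~\ref{thm:Uniqueness} is in fact satisfied here for every $n_{T}\ge 2$, so the non-uniqueness is not a dimension mismatch but the inseparability of the flow on link $(2,3)$ between the O-flows $x_{1}$ and $x_{2}$ --- the same shared-link effect behind Example~\ref{exa:InfLoss-OD2OFlow} --- and the proof has to surface this mechanism and then check that C1--C5 genuinely leave that one direction free.
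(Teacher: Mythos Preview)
Your overall strategy---instantiate the model, reduce $\bm{P}$ via C3--C5, and exhibit a one-parameter family of feasible $(p_{23,1},\{x_2^t\})$---is exactly what the paper does. The discrepancy is topological: you posited three links $(1,2),(1,3),(2,3)$, but the network in Figure~\ref{fig:3-node-unidirectional-network} is the bare chain $1\to 2\to 3$ with only two links. In the paper's setting node~$1$ has a single outgoing edge, so C3 forces $p_{12,1}=1$ outright; there is no $\alpha$, no link $(1,3)$, and $x_1^t=y_{12}^t$ directly. The free direction is then precisely $p_{23,1}\in[0,\min(1,\min_t y_{23}^t/y_{12}^t)]$, and the unique case is $y_{23}^t=0$ with $y_{12}^t\neq 0$ for some $t$, forcing $p_{23,1}=0$. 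Your extra uniqueness branch ``$\alpha=0$'' is an artifact of the phantom link.

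Two consequences of the corrected topology are worth noting. First, your analysis becomes strictly simpler: the $2\times 2$ system has $2n_T$ equations and $2n_T+1$ unknowns after constraints, so the underdetermination is visible by a raw count and there is no need to argue that $\alpha$ is pinned down first. Second, your remark that the necessary condition of Theorem~\ref{thm:Uniqueness} is satisfied for all $n_T\ge 2$ hinges on $n_\ell=3$; with the actual $n_\ell=n_O=2$ the inequality~(\ref{eq:necessary-cond-single-1}) reduces to $n_4\ge 2$, which fails (only $p_{12,2}=0$ comes from C4), so here the lemma is in fact consistent with---rather than sharper than---the dimension count. Your broader point that the obstruction is the shared link $(2,3)$ is correct and is what drives Corollary~\ref{cor:nonuniqueness-unidirectional-network}, but for this particular lemma the paper's argument is a direct count rather than the mechanism you describe.
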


\begin{proof}
We start with the single step model. For the unidirectional network
in Figure \ref{fig:3-node-unidirectional-network}, Constraints C3
and C4 imply that $p_{12,1}=1$, $p_{12,2}=0$, and $p_{23,2}=1$.
The system becomes 
\[
\left[\cdots\begin{array}{c}
y_{12}^{t}\\
y_{23}^{t}
\end{array}\cdots\right]=\left[\begin{array}{cc}
1 & 0\\
p_{23,1} & 1
\end{array}\right]\left[\cdots\begin{array}{c}
x_{1}^{t}\\
x_{2}^{t}
\end{array}\cdots\right],\;1\le t\le n_{T}.
\]
The number of equations is $2n_{T}$ while the number of unknowns
is $1+2n_{T}$. Therefore, this network does not admit a unique solution
in general.

A more careful study reveals that $x_{1}^{t}=y_{12}^{t}$, $\forall t$,
and 
\begin{equation}
p_{23,1}\in\left[0,\min\left(1,\underset{1\le t\le n_{T}}{\min}\frac{y_{23}^{t}}{x_{1}^{t}}\right)\right]=\left[0,\min\left(1,\underset{1\le t\le n_{T}}{\min}\frac{y_{23}^{t}}{y_{12}^{t}}\right)\right].\label{eq:p2-3-t}
\end{equation}
A unique solution exists if and only if $y_{23}^{t}=0$ and $y_{12}^{t}\ne0$
for some $t\in\left\{ 1,\cdots,n_{T}\right\} $, which corresponds
the solution that $p_{23,1}=0$ and $y_{23}^{t}=x_{2}^{t}$.

The proof for the multi-step model is similar and hence omitted here.
The solution is not unique except that $y_{23}^{t+1}=0$ and $y_{12}^{t}\ne0$
for some $t\in\left\{ 1,\cdots,n_{T}-1\right\} $, in which case $p_{23,1}^{2}=0$. 
\end{proof}
\vspace{0cm}

The following corollary is a direct application of Lemma \ref{lem:nonuniqueness-3-node-unidirectional-network}. 
\begin{cor}
\label{cor:nonuniqueness-unidirectional-network}Unidirectional networks
that contain Figure \ref{fig:3-node-unidirectional-network} as a
subgraph do not admit a unique solution in general. 
\end{cor}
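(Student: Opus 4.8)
The plan is to obtain the corollary as a genuine ``direct application'' of Lemma~\ref{lem:nonuniqueness-3-node-unidirectional-network}, by arguing that the degree of freedom responsible for non-uniqueness in the three node chain is \emph{local} to that chain and therefore survives when the chain is embedded into a larger unidirectional network $\mathcal{G}$. Concretely, I would fix three nodes $u,v,w$ of $\mathcal{G}$ together with the links $u\to v$ and $v\to w$ that realize Figure~\ref{fig:3-node-unidirectional-network} as a subgraph; since $\mathcal{G}$ is unidirectional the reverse links $v\to u$ and $w\to v$ are absent, so around this chain the constraints C3--C5 impose exactly the same local relations used in the lemma. Starting from an arbitrary feasible pair $(\{\hat{\bm{P}}^{\tau}\},\{\hat{\bm{x}}^{t}\})$, i.e.\ one satisfying $\bm{y}^{t}=\hat{\bm{P}}^{t}*\hat{\bm{x}}^{t}$ for all $t$ subject to C1--C5, the goal is to produce a second, distinct feasible solution via a small localized perturbation, thereby contradicting uniqueness.

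The perturbation mimics the one in the lemma: transfer a small amount $\varepsilon$ of the O-flow originating at $u$ from the trips that terminate at $v$ onto the trips that continue along $v\to w$ (equivalently, raise the corresponding proportion, the analogue of $p_{23,1}$), and simultaneously lower $\hat{x}_{v}$ at the matching time index so that the link flow on $v\to w$ is unchanged, while the link flow on $u\to v$ is unchanged automatically because neither $\hat{x}_{u}$ nor $p_{uv,u}$ is touched. For $|\varepsilon|$ small enough the non-negativity constraint C1 and the inequality constraints C2, C4, C5 are preserved, and the observability equalities C3 at $u$ and at $v$ still hold, since flow has only been moved between two paths of the same O-flow and $\hat{x}_{v}$ rebalanced accordingly. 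This is precisely the freedom in $p_{23,1}$ used in the proof of Lemma~\ref{lem:nonuniqueness-3-node-unidirectional-network}, now executed at the triple $(u,v,w)$.

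The main obstacle is \emph{confinement}. In a larger network, $v$ and $w$ may have further incident links, and other O-flows may traverse $u\to v$ or $v\to w$, so one must verify that the rebalancing alters no link flow other than on $u\to v$ and $v\to w$ and forces no C3 row sum to change at $v$ or $w$. In the ``clean'' case, where $v$ has in- and out-degree one within the relevant substructure, this check is immediate and reproduces the lemma verbatim; in the general case the correction to $\hat{x}_{v}$ must be propagated forward along the continuation of that O-flow out of $v$ (forced by C3--C4), and one must check that this cascade closes without conflict. This is where the qualifier ``in general'' does its work: exactly as the lemma has a non-generic exception in which the link-flow data pin the perturbed proportion to $0$ (see the interval in~(\ref{eq:p2-3-t})), a similarly non-generic configuration of the surrounding link flows could again pin the perturbation down, whereas for all other data the embedded chain furnishes a one-parameter family of feasible solutions. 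Assembling the local structure, the perturbation, and this genericity discussion yields the corollary.
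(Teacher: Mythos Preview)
Your proposal is consistent with the paper's intent, and in fact considerably more detailed: the paper offers no proof at all beyond the sentence ``a direct application of Lemma~\ref{lem:nonuniqueness-3-node-unidirectional-network},'' so you are supplying the argument the authors left implicit. The perturbation you describe---shifting mass from $s_{uv}$ to $s_{uw}$ inside the O-flow $x_u$ while lowering $x_v$ to keep the observed link flow on $v\to w$ fixed---is exactly the degree of freedom in $p_{23,1}$ that the lemma exhibits, transplanted to the subgraph.

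One point worth tightening: your confinement discussion correctly flags that lowering $x_v$ affects \emph{every} link carrying O-flow $v$, not just $v\to w$, so the compensation must adjust $p_{vk,u}$ for \emph{all} outgoing links $k$ of $v$ (not only $k=w$), and then the additional O-flow-$u$ mass arriving at each such $k$ simply terminates there (which only slackens C5 at $k$, so no further cascade is forced in the single-step model). In the multi-step model the downstream links used by O-flow $v$ do require a genuine forward cascade, and ``closes without conflict'' is not automatic; it is here that the qualifier ``in general'' is doing real work, just as it does in the lemma itself via the interval~(\ref{eq:p2-3-t}). The paper does not work any of this out either, so your level of rigor already matches or exceeds the source.
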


\vspace{0cm}

The above negative result for unidirectional networks can be addressed
by assuming that the O-flows are sparse. In particular, define the
vector $\bm{x}_{o}=\left[x_{o}^{1},\cdots,x_{o}^{n_{T}}\right]^{T}$,
$\forall o\in\left\{ 1,\cdots,n_{n}\right\} $. This vector is called
\emph{sparse} if the number of non-zeros entries in $\bm{x}_{0}$
is much less than the zero entries.

We show how sparsity assumption results in a unique solution for the
three-node unidirectional network in Figure \ref{fig:3-node-unidirectional-network}.
Recall the range of $p_{23,1}$ given in (\ref{eq:p2-3-t}) in the
proof for Lemma \ref{lem:nonuniqueness-3-node-unidirectional-network}.
Suppose that 
\[
\underset{1\le t\le n_{T}}{\min}\frac{y_{23}^{t}}{y_{12}^{t}}\le1.
\]
Then the solution 
\[
p_{23,1}=\underset{1\le t\le n_{T}}{\min}\frac{y_{23}^{t}}{y_{12}^{t}}
\]
gives the sparsest solution $\bm{x}^{t}$ where $x_{2}^{t}=0$ for
all 
\[
t\in\left\{ \tilde{t}:\;\frac{y_{23}^{\tilde{t}}}{y_{12}^{\tilde{t}}}=\underset{1\le t^{\prime}\le n_{T}}{\min}\frac{y_{23}^{t^{\prime}}}{y_{12}^{t^{\prime}}}\right\} .
\]
All other solutions 
\[
p_{23,1}\in\left[0,\underset{1\le t\le n_{T}}{\min}\frac{y_{23}^{t}}{x_{1}^{t}}\right)
\]
result in less sparse solutions, i.e., $x_{2}^{t}>0$ for all $t$.

For practical usage, we assume that O-flows are sparse in a transform
domain. For a given origin $o$, represent the time series of the
corresponding O-flows in a vector form $\bm{x}_{o}=\left[x_{o}^{1},\cdots,x_{o}^{n_{T}}\right]^{T}$.
We assume that $\bm{x}_{o}$ is sparse under an invertible transform
$\bm{D}$, i.e., most entries of $\bm{c}_{o}:=\bm{D}\bm{x}_{o}$ are
zeros and $\bm{D}^{-1}$ exists. The sparsity in the transform domain
implies that the O-flows exhibit some temporal patterns, which matches
everyday experience. In the simulations of this paper, the transform
is chosen to be discrete cosine transform (DCT) where the transformation
matrix $\bm{D}$ is orthonormal, i.e., $\bm{D}^{-1}=\bm{D}^{T}$. 

This transform domain sparsity promotes a unique solution. For simplicity
of discussion, consider the single step model which can be written
as $\left[\bm{y}^{1},\cdots,\bm{y}^{n_{T}}\right]=\bm{P}\left[\bm{x}^{1},\cdots,\bm{x}^{n_{T}}\right]=\bm{P}\left[\cdots,\bm{x}_{o},\cdots\right]^{T}=\bm{P}\bm{C}^{T}\bm{D}^{-T}$
where $\bm{C}=\left[\cdots,\bm{c}_{o},\cdots\right]$. Define $\tilde{\bm{Y}}=\bm{Y}\bm{D}^{T}$.
Then $\tilde{\bm{Y}}=\bm{P}\bm{C}^{T}$ where each row of the matrix
$\bm{C}^{T}$ is sparse. Let $\mathcal{T}_{o,{\rm nz}}=\left\{ t:\;c_{o}^{t}\ne0\right\} $
for any given origin $o$. We conjure that if $\mathcal{T}_{o,{\rm nz}}\nsubseteq\bigcup_{o^{\prime}}\mathcal{T}_{o^{\prime},{\rm nz}}$
for all $o\in\left\{ 1,\cdots,n_{n}\right\} $, then the equivalent
single step model $\tilde{\bm{Y}}=\bm{P}\bm{C}^{T}$ admits a unique
solution with sufficiently large $n_{T}$ and diverse $\bm{C}^{T}$.

The sparsity assumption used in this paper is significantly different
from that in \cite{menon2015fine} (see Section \ref{subsec:OD-Estimation-Literature}
for more detailed discussions on \cite{menon2015fine}). In \cite{menon2015fine},
it assumes that among all the OD flows, only a small fraction of them
are significant and the rest can be safely approximated zero. The
way to identify significant OD flows is to use historical data. By
contrast, in our assumption all O-flows are nonzero and only the transform
coefficients are sparse. Our assumption has wider applicability for
several reasons. First, the assumption involving zero flows may not
be true when analyzing busy communities. Second, in applications with
no or little historical data, there is not enough prior knowledge
to identify which O-flows are more significant than the others. 

\subsection{\label{subsec:Algorithm-sparsity} Estimation with the Sparsity Assumption}

As shown in Section \ref{subsec:Uniqueness-Unidirectional-Networks},
the transform domain sparsity promotes a unique solution. Following
the notations at the end of Section \ref{subsec:Uniqueness-Unidirectional-Networks},
we assume that $\bm{x}_{o}$ is sparse under an orthonormal transformation
matrix $\bm{D}$, that is, $\bm{c}_{o}=\bm{D}\bm{x}_{o}$ and the
number of nonzeros in $\bm{c}_{o}$ is much less than $n_{T}$. To
enforce this sparsity constraint, the estimation problem can be written
as
\begin{align*}
\min_{\{\bm{P}^{t}\},\;\{\bm{x}^{t}\}} & \;\sum_{t=1}^{n_{T}}\frac{1}{2}\left\Vert \bm{y}^{t}-\sum_{\tau=1}^{\tau_{\max}}\bm{P}^{\tau}\bm{x}^{t-\tau+1}\right\Vert _{2}^{2}+\sum_{o}\lambda_{o}\left\Vert \bm{D}\bm{x}_{o}\right\Vert _{0},
\end{align*}
where $\lambda_{o}\ge0$, $\forall o$, are appropriately chosen parameters,
and $\left\Vert \cdot\right\Vert _{0}$ denotes the $\ell_{0}$ pseudo-norm
which counts the number of nonzero elements. The $\ell_{0}$ pseudo-norm
is not convex. The common practice is to replace it with its convex
envelope $\ell_{1}$-norm where $\left\Vert \bm{c}\right\Vert _{1}:=\sum_{j}\left|c_{j}\right|$.
One has 
\begin{equation}
\min_{\{\bm{P}^{t}\},\;\{\bm{x}^{t}\}}\;\sum_{t}\frac{1}{2}\left\Vert \bm{y}^{t}-\sum_{\tau=1}^{\tau_{\max}}\bm{P}^{\tau}\bm{x}^{t-\tau+1}\right\Vert _{2}^{2}+\sum_{o}\lambda_{o}\left\Vert \bm{D}\bm{x}_{o}\right\Vert _{1}.\label{eq:Joint-Estimation-l1}
\end{equation}
This formulation looks similar to the famous LASSO \cite{tibshirani1996regression,donoho2006most}
except that the problem in (\ref{eq:Joint-Estimation-l1}) is bilinear
and hence non-convex.

The regularization parameters $\lambda_{o}$ in (\ref{eq:Joint-Estimation-l1})
have to be carefully chosen. According to the authors' knowledge,
there is no recipe to choose the optimal values of the regularization
parameters $\lambda_{o}$. Typically, their values are set by trial
and error.

To minimize possible efforts of parameter tuning, we assume that the
level of noise in link flow observations is known a priori (which
is acceptable for real applications) and design the following constrained
optimization approach. Suppose that the relative level of noise is
upper bounded by $\epsilon>0$, i.e., 
\[
\frac{\sum_{t}\left\Vert \bm{y}^{t}-\bm{P}^{t}*\bm{x}^{t}\right\Vert _{2}^{2}}{\sum_{t}\left\Vert \bm{y}^{t}\right\Vert _{2}^{2}}\leq\epsilon.
\]
We have the following formulation: 
\begin{align}
\min_{\{\bm{P}^{t}\},\;\{\bm{x}^{t}\}}\; & \sum_{o}\left\Vert \bm{D}\bm{x}_{o}\right\Vert _{1}\nonumber \\
{\rm subject\;to}\; & \sum_{t=1}^{n_{T}}\left\Vert \bm{y}^{t}-\bm{P}^{t}*\bm{x}^{t}\right\Vert _{2}^{2}\leq\epsilon\sum_{t=1}^{n_{T}}\left\Vert \bm{y}^{t}\right\Vert _{2}^{2}.\label{eq:L1-x-minimization-constrained}
\end{align}
The Gauss-Seidel method can be applied to solve the optimization problem
iteratively.

However, there is a technical issue for early iterations when the
Gauss-Seidel method is applied. In the early stage of the iterations,
the estimated $\left\{ \hat{\bm{P}}^{t}\right\} $ and $\left\{ \hat{\bm{x}}^{t}\right\} $
may not satisfy the constraint specified in (\ref{eq:L1-x-minimization-constrained}).
To address this issue, one may first run the Gauss-Seidel method for
the objective function (\ref{eq:Joint-Estimation_multiple}) until
the estimated $\left\{ \hat{\bm{P}}^{t}\right\} $ and $\left\{ \hat{\bm{x}}^{t}\right\} $
satisfy the constraint in (\ref{eq:L1-x-minimization-constrained}),
and then apply the Gauss-Seidel method for the problem (\ref{eq:L1-x-minimization-constrained}).
In the case that the convergence rate is too slow, one may first relax
$\epsilon>0$ to a larger value $\epsilon^{\prime}>\epsilon$, and
then swap back to $\epsilon$ after the algorithm converges for $\epsilon^{\prime}$.

\section{\label{sec:Simulations}Simulations and Results}

The purpose of the simulations is to demonstrate the possibility of
blind estimation. For this purpose, we generate our own synthetic
data and do not assume measurement noise. Furthermore, for implementation
simplicity, we focus on the rigid multi-step models specified in Remark
\ref{rem:rigid-model}. 

\subsection{\label{subsec:Generation-of-Networks}Simulation Setting}

Four networks in Figure \ref{fig:bidirectional-networks} are involved
in the numerical test. For each network, we consider all and only
the OD trips with distance at most 4 links. Based on this assumption,
basic information about these four networks is listed below. 
\begin{itemize}
\item The 3-by-3 unidirectional network in Figure \ref{fig:3-by-3-Uni-network}:
12 links, 27 OD pairs, and 8 origin nodes. 
\item The 3-by-3 bidirectional network in Figure \ref{fig:3-by-3-Bi-network}:
24 links, 72 OD pairs, and 9 origin nodes. 
\item The 8-by-8 bidirectional network in Figure \ref{fig:8-by-8-network}:
224 links, 1660 OD pairs, and 64 origin nodes. 
\item The G\'EANT network \cite{uhlig2006providing} in Figure \ref{fig:GEANT-network}:
74 links, 490 OD pairs, and 23 origin nodes. 
\end{itemize}
In the third network, the number of link flows is less than $\nicefrac{1}{7}$
of that of OD flows. The G\'EANT network listed the last is a real
network \cite{uhlig2006providing}, where the number of link flows
is slightly more than $\nicefrac{1}{7}$ of that of OD flows. For
both networks, conventional models based OD flows result in severely
ill-posed inverse problems.

\begin{figure}
\begin{centering}
\subfloat[\label{fig:3-by-3-Uni-network}]{\begin{centering}
\includegraphics[scale=0.3]{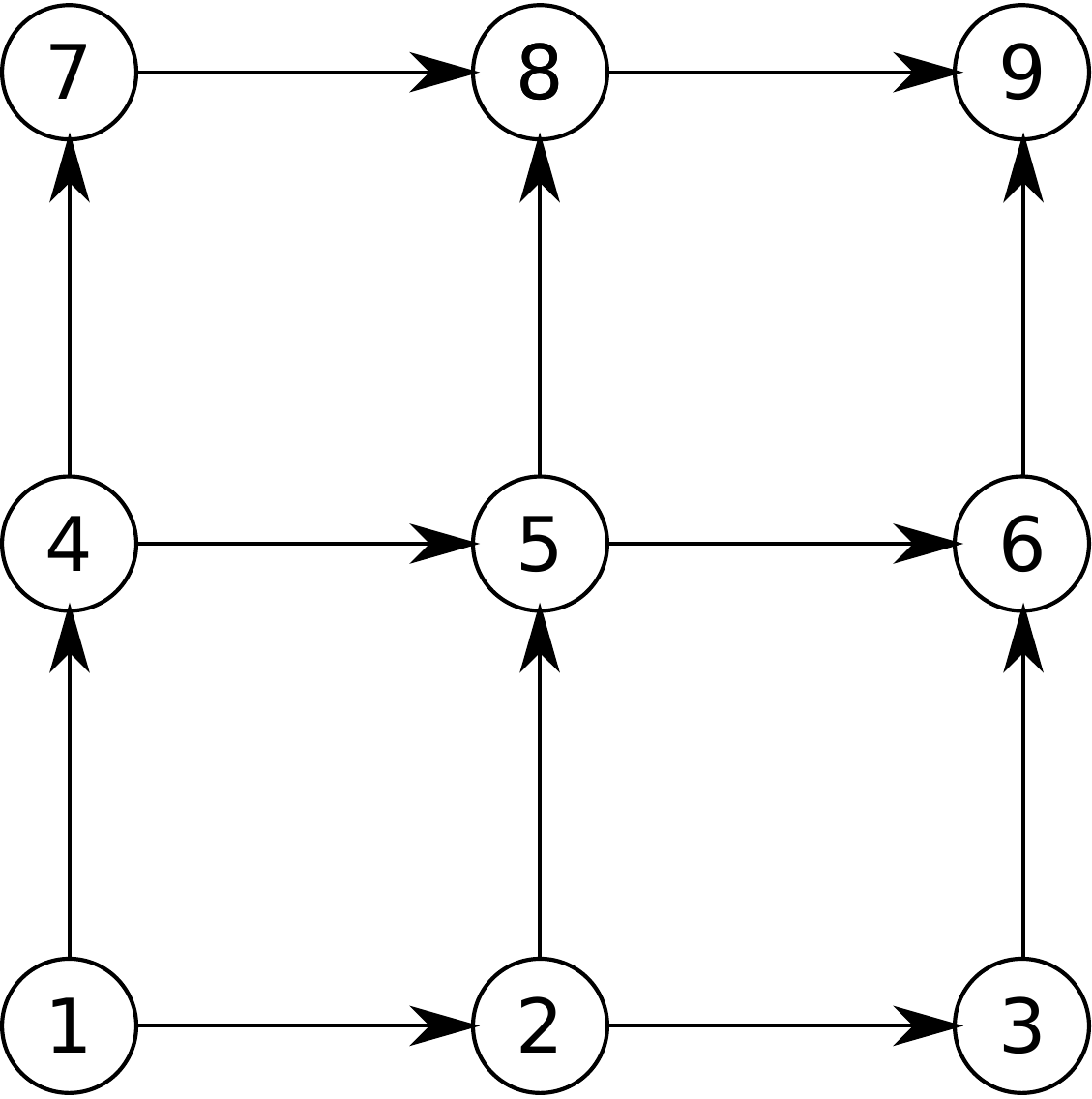} 
\par\end{centering}
}\hspace{0.5cm}\subfloat[\label{fig:3-by-3-Bi-network}]{\begin{centering}
\includegraphics[scale=0.3]{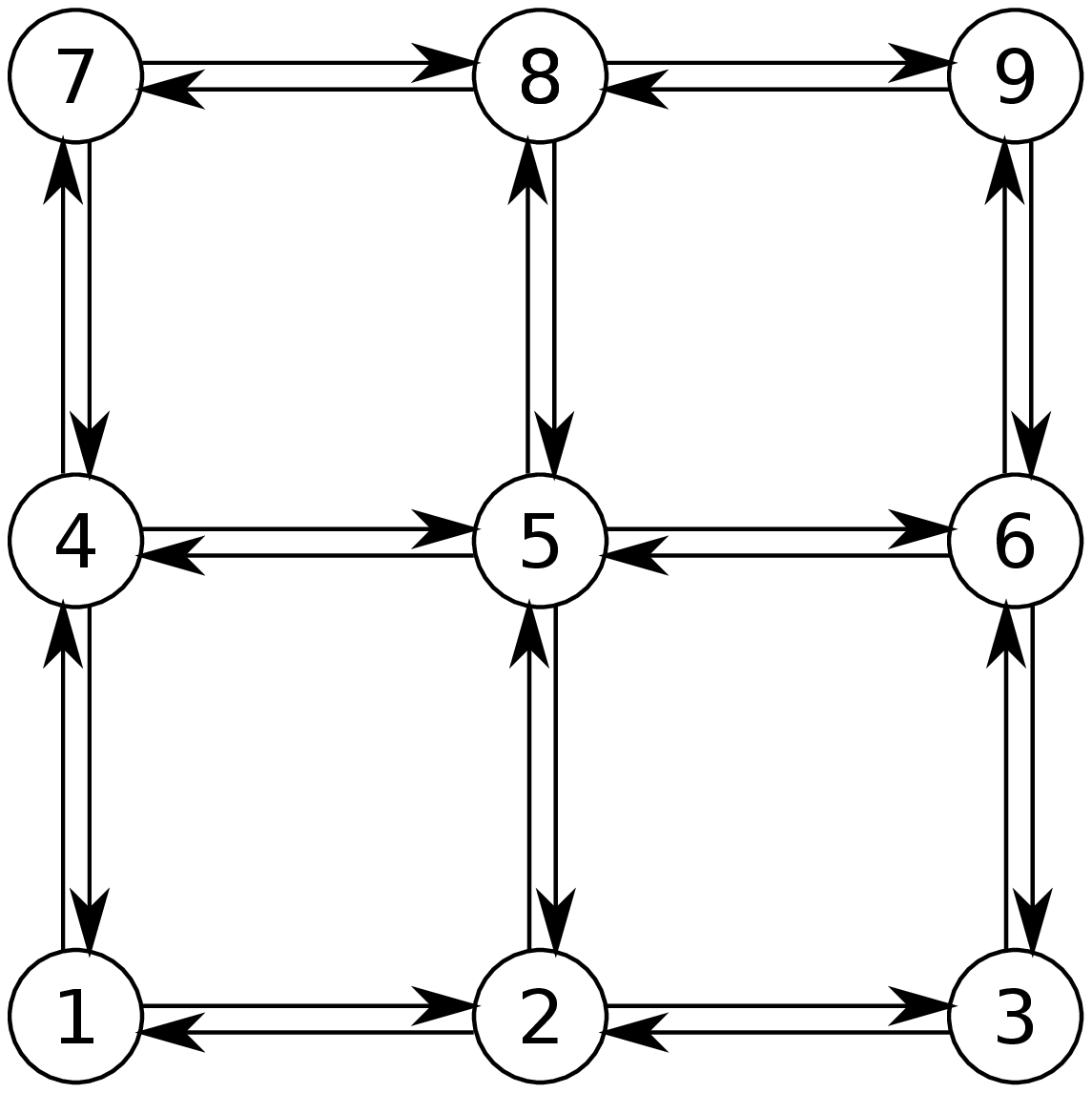} 
\par\end{centering}
}\hspace{0.5cm}\subfloat[\label{fig:8-by-8-network}]{\begin{centering}
\includegraphics[scale=0.3]{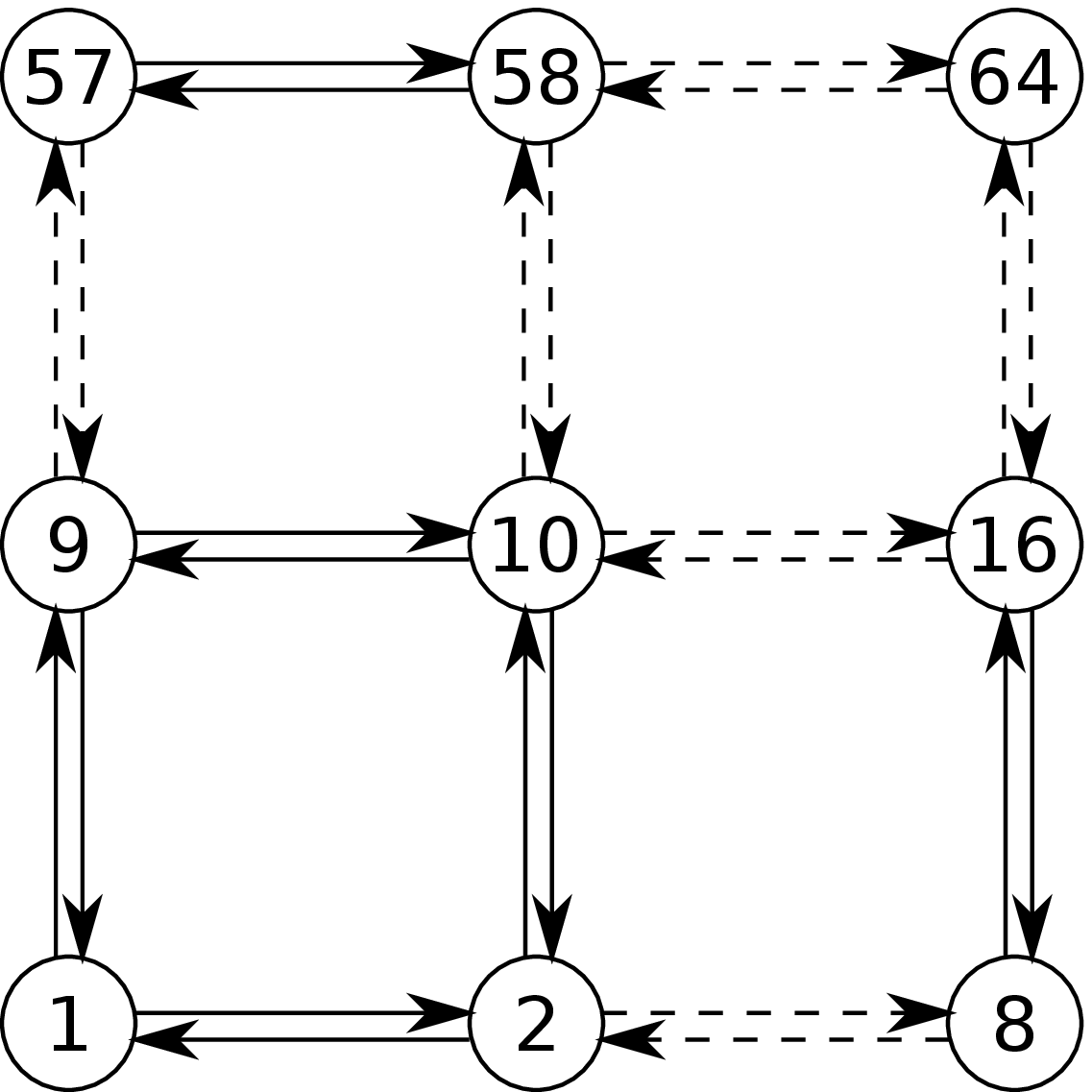} 
\par\end{centering}
}
\par\end{centering}
\begin{centering}
\subfloat[\label{fig:GEANT-network}]{\begin{centering}
\includegraphics[scale=0.6]{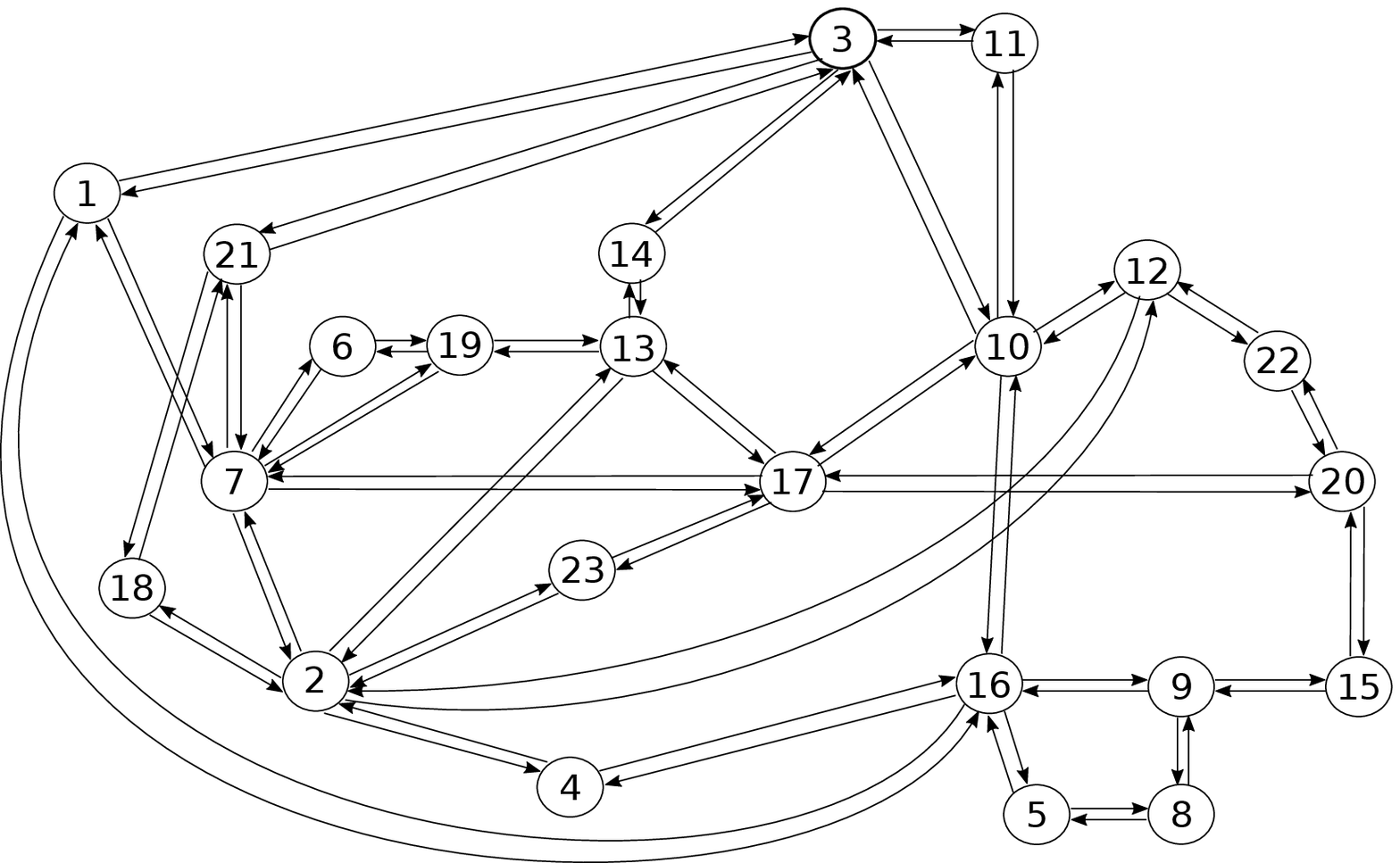} 
\par\end{centering}
}
\par\end{centering}
\caption{\label{fig:bidirectional-networks}Bidirectional networks used in
the simulations. A: A 3-by-3 Unidirectional network. B: A 3-by-3 bidirectional
network. C: An 8-by-8 bidirectional network. D: The G\'EANT network
\cite{uhlig2006providing}. }
\end{figure}

The ground truth data are generated as follows. 
\begin{itemize}
\item Firstly, we find out all valid paths and OD trips for each network.
For a given network $\mathcal{G}\left(\mathcal{N},\mathcal{L}\right)$,
we find all the loop-free paths that involve at most $\tau_{\max}$
many links and denote the corresponding list by $\mathcal{L}_{{\rm path}}$.
In this paper, we set $\tau_{\max}=4$. Using these paths, a list
of valid OD trips is generated and denoted by $\mathcal{L}_{{\rm OD}}$.
Based on $\mathcal{L}_{{\rm OD}}$, a list of origin nodes is constructed
and denoted by $\mathcal{L}_{{\rm O}}$. 
\item Then, we randomly generate the assignment matrix for the O-flow model. 
\begin{itemize}
\item For each given origin node $o\in\mathcal{L}_{{\rm O}}$, all the OD
pairs in $\mathcal{L}_{{\rm OD}}$ originating from $o$ are identified.
Then a probability distribution over these OD pairs is randomly generated.
This probability distribution defines the fractions of the O-flow
$x_{o}$ for the involved OD flows $s_{od}$. 
\item For each given OD pair $od\in\mathcal{L}_{{\rm OD}}$, all the paths
in $\mathcal{L}_{{\rm path}}$ starting from the node $o$ and ending
at the node $d$ are found. A probability distribution over these
paths is randomly generated, which defines the fractions of the OD
flow for the involved path flows $s_{\bm{p}}^{{\rm path}}$. 
\item The assignment matrices related to both the OD flow and the O-flow
models can be computed via Equations (\ref{eq:PathTrafficAssignment2ODTrafficAssignment})
and (\ref{eq:ODTrafficAssignment2OTrafficAssignment}), respectively.
Note that application of (\ref{eq:ODTrafficAssignment2OTrafficAssignment})
requires the knowledge of OD flow assignment matrix $\bm{A}^{t}$,
which further requires the knowledge of path flow assignment matrix
according to (\ref{eq:PathTrafficAssignment2ODTrafficAssignment}).
The previous sub-step is necessary. 
\end{itemize}
\item Finally, we randomly generate O-flows and compute the corresponding
OD flows and link flows. We generate O-flows $\bm{x}^{t}$, $t=2-\tau_{\max},3-\tau_{\max},\cdots,n_{T}$,\footnote{As explained in Remark \ref{rem:time-horizon-Oflows}, $\bm{x}^{t}$,
$t=2-\tau_{\max},3-\tau_{\max},\cdots,n_{T}$, are involved in generating
$\bm{y}^{t}$, $t=1,\cdots,n_{T}$. } according to the transform domain sparsity model described in Section
\ref{subsec:Uniqueness-Unidirectional-Networks} (DCT transform is
used). Based on the generated O-flows, the OD flows $\bm{s}^{t}$,
$t=2-\tau_{\max},\cdots,n_{T}$, can be computed via (\ref{eq:OFlow2ODFlow-multiple}),
and the link flows $\bm{y}^{t}$, $t=1,2,\cdots,n_{T},$ can be evaluated
via (\ref{eq:O2Link-MultiStep}). 
\end{itemize}
\vspace{0cm}

After generating the ground truth data, the test stage proceeds as
follows. Given the link flows $\bm{y}^{t}$, the alternating minimization
procedure described in Section \ref{sec:Blind-Estimation} is used
to estimate the O-flows $\{\hat{\bm{x}}^{t}\}$, $t=1,2,\cdots,n_{T}$,
and the assignment matrices $\{\hat{\bm{P}}^{t}\}$, $t=1,2,\cdots,\tau_{\max}$.

The stopping criteria to quit the iterations include the following: 
\begin{enumerate}
\item The maximum number of iterations is 5000. 
\item Define the normalized mean square error by 
\[
{\rm NMSE}_{y}=\frac{\sum_{t}\left\Vert \hat{\bm{y}}^{t}-\bm{y}^{t}\right\Vert _{2}^{2}}{\sum_{t}\left\Vert \bm{y}^{t}\right\Vert _{2}^{2}}=\frac{\sum_{t}\left\Vert \hat{\bm{P}}^{t}*\hat{\bm{x}}^{t}-\bm{y}^{t}\right\Vert _{2}^{2}}{\sum_{t}\left\Vert \bm{y}^{t}\right\Vert _{2}^{2}}.
\]
The iteration stops when ${\rm NMSE}_{y}<10^{-5}$. 
\item When the sparsity constraint is included, i.e., the formulation (\ref{eq:L1-x-minimization-constrained})
is concerned, another stopping criterion is the speed of convergence.
In particular, we define the decrease of the objective function at
the $k$-th iteration as $\Delta=(\sum_{o=1}^{n_{n}}\left\Vert \bm{D}\bm{x}_{o}\right\Vert _{1})_{k-1}-(\sum_{o=1}^{n_{n}}\left\Vert \bm{D}\bm{x}_{o}\right\Vert _{1})_{k}$.
The iteration stops when $\Delta<10^{-5}$. 
\end{enumerate}
\vspace{0cm}

After obtaining the estimated O-flows $\{\hat{\bm{x}}^{t}\}$ and
the corresponding assignment matrices $\{\hat{\bm{P}}^{t}\}$, we
compute the estimated OD flows $\{\hat{\bm{s}}^{t}\}$, $t=1,2,\cdots,n_{T}$,
via Equation (\ref{eq:OFlow2ODFlow-multiple}). The estimation performance
is measured by the relative error in the estimated OD flows defined
by 
\begin{equation}
E_{s_{od}}^{t}=\frac{\hat{s}_{od}^{t}-s_{od}^{t}}{s_{od}^{t}}.\label{eq:OD-flow-relative-error}
\end{equation}
When the value of $s_{od}^{t}$ is small, a small error in $\hat{s}_{od}^{t}$
may produce large relative error. If an approach performs well in
relative error, it must be good. 

In the simulations, the time horizon $n_{T}$ is chosen to not break
the necessary condition in Theorem \ref{thm:Uniqueness}. The number
of random trials is chosen according to the computation time (decided
by the size of the problem as well as the convergence rate of Algorithm
\ref{alg:AlternativeOpt}). 
\begin{itemize}
\item The 3-by-3 unidirectional network in Figure \ref{fig:3-by-3-Uni-network}:
$n_{T}=60$, and 10 random trials. 
\item The 3-by-3 bidirectional network in Figure \ref{fig:3-by-3-Bi-network}:
$n_{T}=60$, and 100 random trials. 
\item The 8-by-8 bidirectional network in Figure \ref{fig:8-by-8-network}:
$n_{T}=150$, and 10 random trials.
\item The G\'EANT network \cite{uhlig2006providing} in Figure \ref{fig:GEANT-network}:
$n_{T}=150$, and 10 random trials.
\end{itemize}

\subsection{\label{subsec:Simulation-Results}Blind Estimation Made Possible}

\begin{figure}[h]
\begin{centering}
\subfloat[\label{fig:Result_of_3-by-3-Bi_network}The 3-by-3 bidirectional network]{\begin{centering}
\includegraphics[scale=0.5]{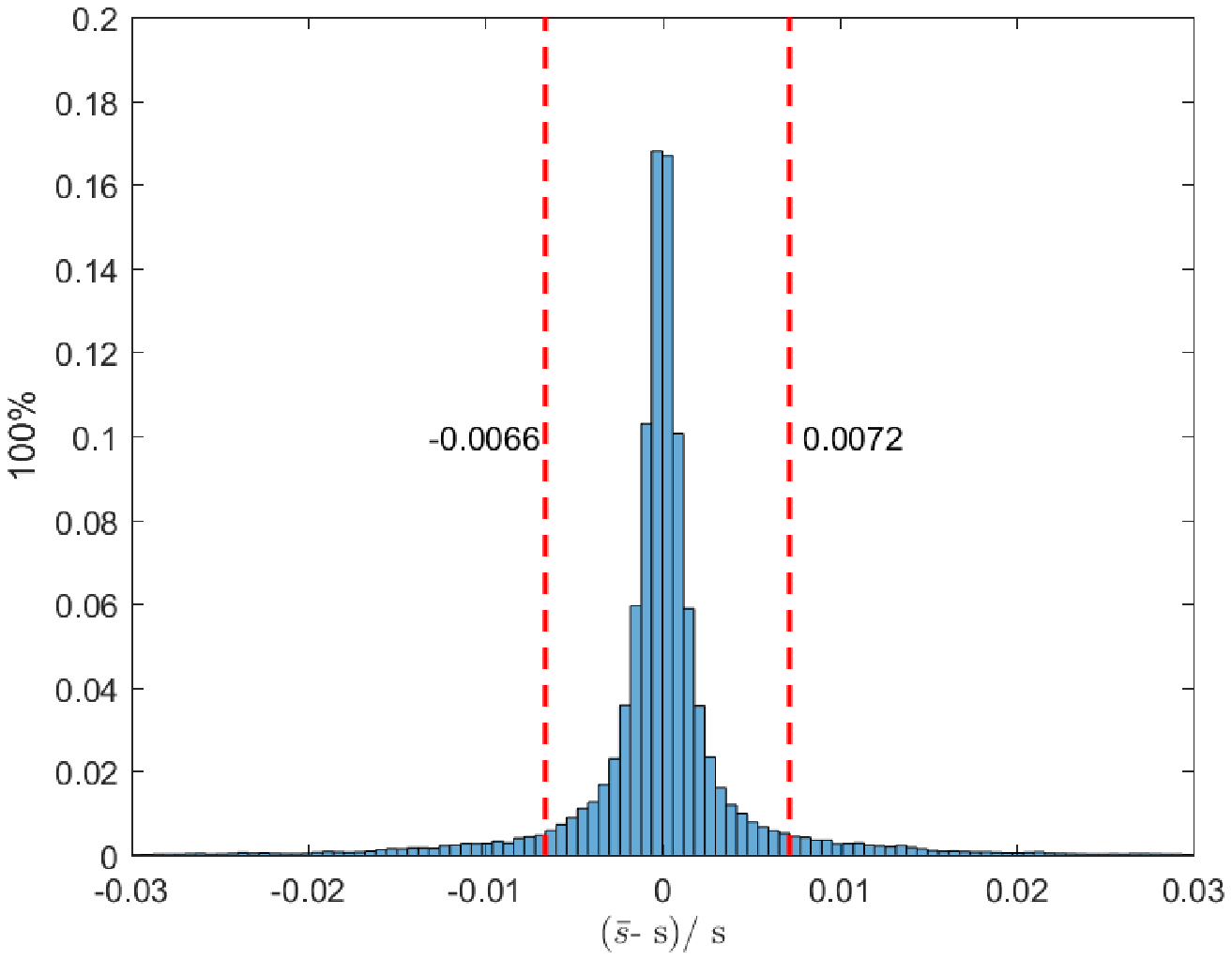} 
\par\end{centering}
}\subfloat[\label{fig:Result_of_8-by-8-Bi_network}The 8-by-8 bidirectional network]{\begin{centering}
\includegraphics[scale=0.5]{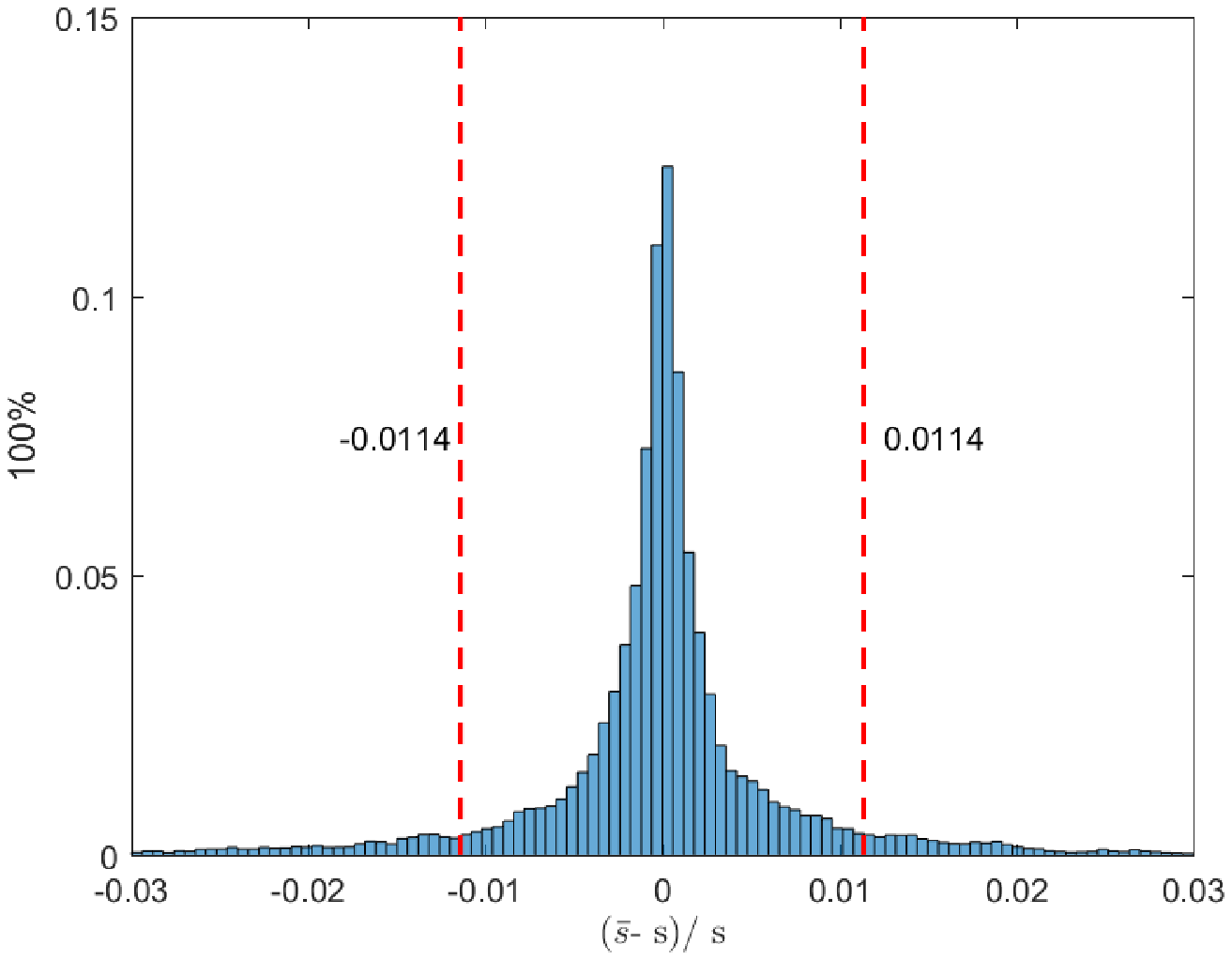} 
\par\end{centering}
}
\par\end{centering}
\caption{\label{fig:Simulation-Bidirectional-Network-1}Histogram of relative
errors: The 3-by-3 and 8-by-8 bidirectional networks }
\end{figure}

\begin{figure}

\begin{centering}
\includegraphics[scale=0.5]{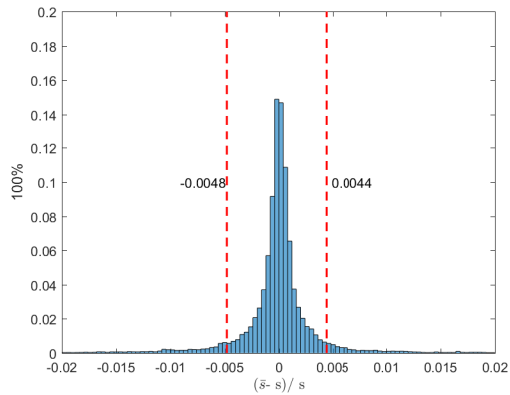}
\par\end{centering}
\caption{\label{fig:Simulation-Bidirectional-Network-2}Histogram of relative
errors: The G\'EANT bidirectional network }

\end{figure}

\begin{figure}[H]
\begin{centering}
\subfloat[\label{fig:UniDirect-WithoutSparsity}Without transform domain sparsity
assumption]{\begin{centering}
\includegraphics[scale=0.5]{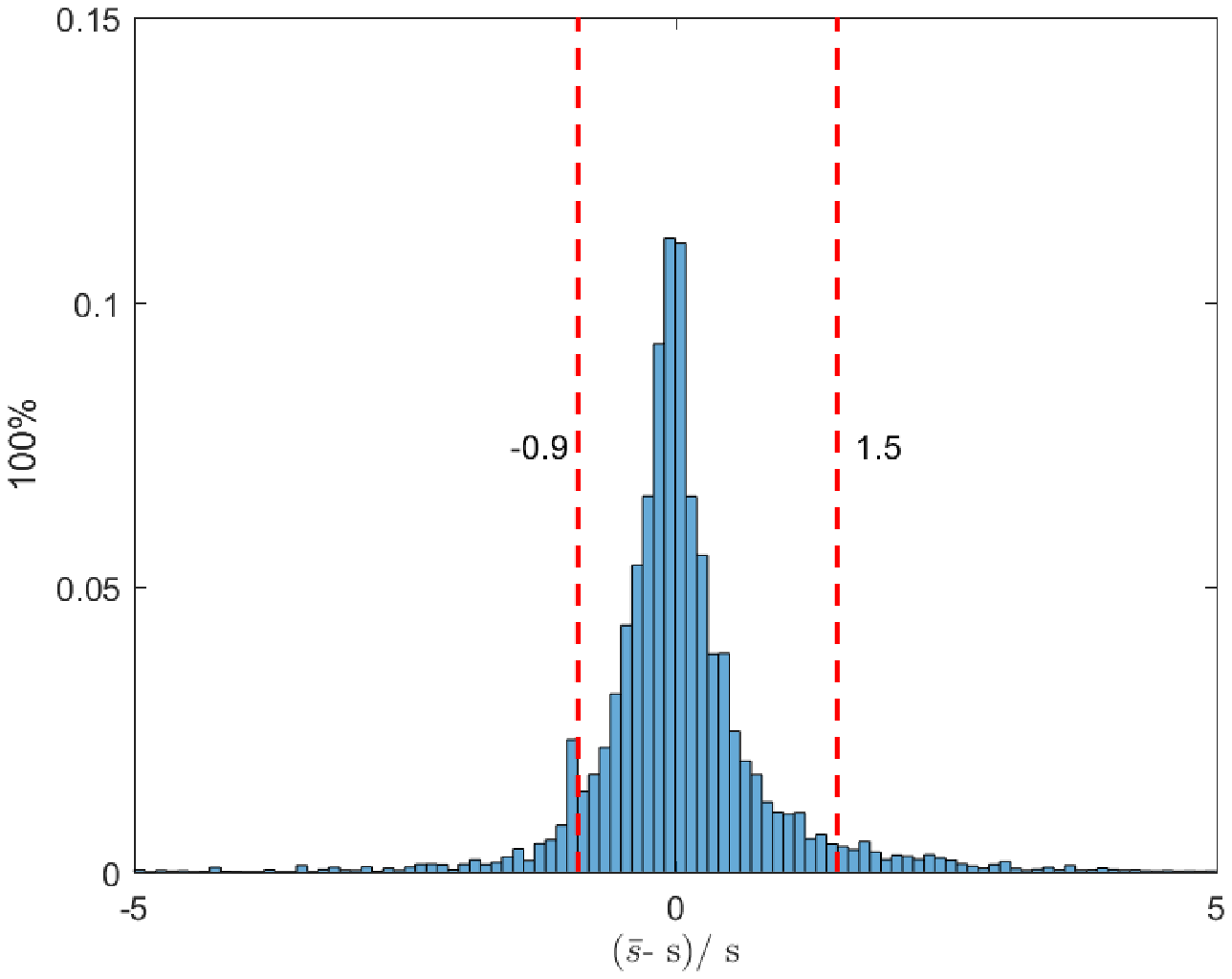} 
\par\end{centering}
}\subfloat[\label{fig:UniDirect-WithSparsity}With transform domain sparsity
assumption]{\begin{centering}
\includegraphics[scale=0.5]{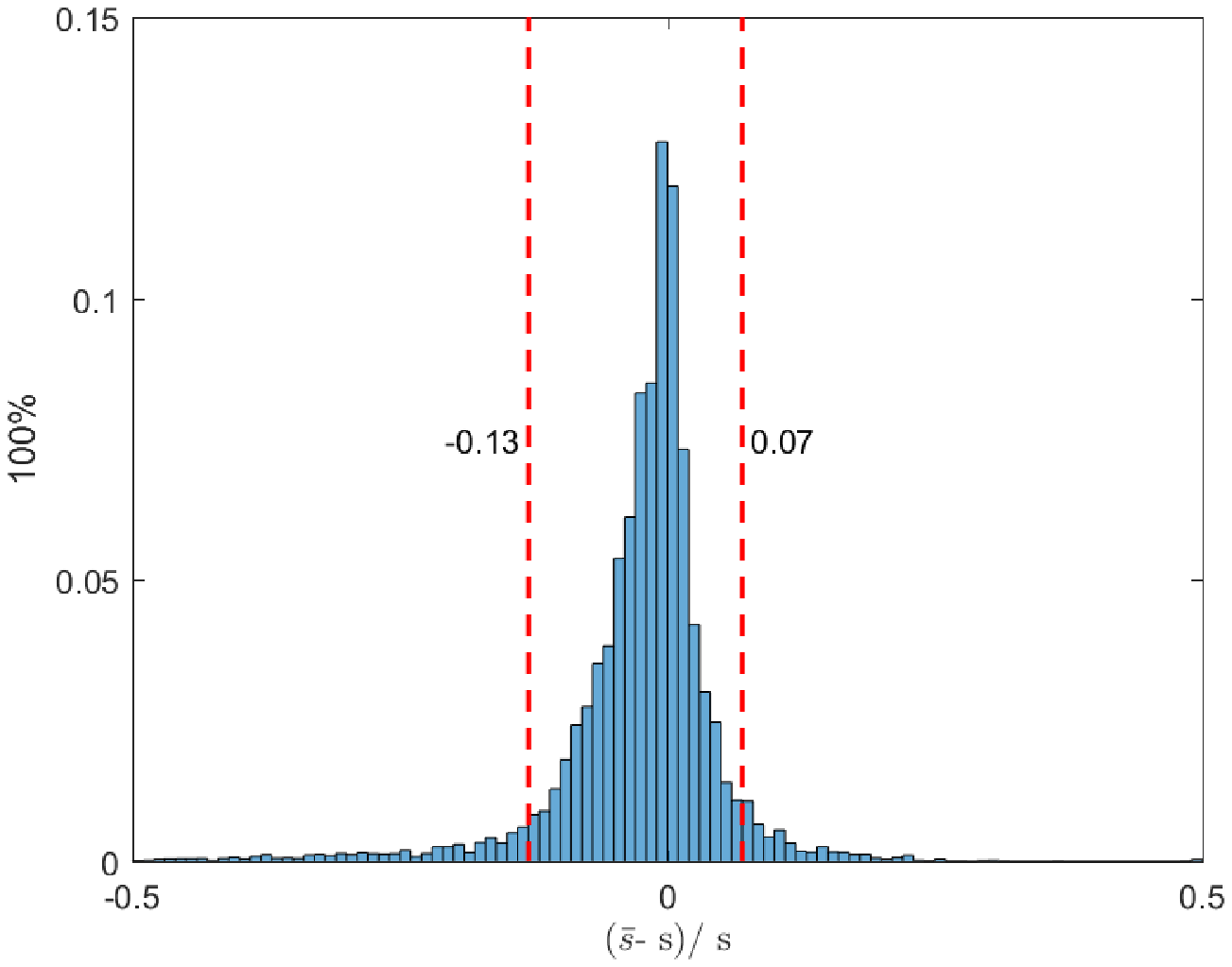} 
\par\end{centering}
}
\par\end{centering}
\caption{\label{fig:Simulation-Unidirectional-Network}Histogram of relative
error: The 3-by-3 unidirectional network. }
\end{figure}

Simulations demonstrate that for bidirectional networks our approach
accurately recovers the OD flows without any prior information. The
simulation results are presented in Figures \ref{fig:Simulation-Bidirectional-Network-1}
and \ref{fig:Simulation-Bidirectional-Network-2} in the form of the
histogram of relative errors (\ref{eq:OD-flow-relative-error}). The
x axis denotes the value of the relative error in the estimated OD
flows. The y axis gives the percentage of OD flows for a given range
of relative error. Each figure also contains two vertical dashed lines:
accumulated 2.5\% of the OD flows have (negative) relative error less
than the value indicated by the left dashed line, accumulated 2.5\%
of the OD flows have relative error larger than the value indicated
by the right dashed line, and accumulated 95\% of the OD flows have
the relative error between the values indicated by the dashed lines.
Simulation results show that 
\begin{itemize}
\item The 3-by-3 bidirectional network: Average relative error (in absolute
value) is less than $0.1\%$. More than 95\% of the estimated OD flows
have relative errors in $\left[-0.66\%,0.72\%\right]$. 
\item The 8-by-8 bidirectional network: Average relative error (in absolute
value) is less than $0.1\%$. More than 95\% of the estimated OD flows
have relative errors in $\left[-1.14\%,1.14\%\right]$. 
\item The G\'EANT network: Average relative error (in absolute value) is
less than $0.1\%$. More than 95\% of the estimated OD flows have
relative errors in $\left[-0.48\%,0.44\%\right]$. 
\end{itemize}
The estimated OD flows are very accurate. 

Simulation results for the 3-by-3 unidirectional network are presented
in Figure \ref{fig:Simulation-Unidirectional-Network}. Figure \ref{fig:UniDirect-WithoutSparsity}
and Figure \ref{fig:UniDirect-WithSparsity} show the histogram of
relative errors before and after considering the transform domain
sparsity, respectively:
\begin{itemize}
\item Before considering transform domain sparsity: Average relative error
(in absolute value) is about $53\%$. More than 95\% of the estimated
OD flows have relative errors in $\left[-90\%,150\%\right]$. 
\item After considering transform domain sparsity: Average relative error
(in absolute value) is about $3\%$. More than 95\% of the estimated
OD flows have relative errors in $\left[-13\%,7\%\right]$. 
\end{itemize}
\vspace{0cm}

The results for the unidirectional network are less impressive compare
to those for bidirectional networks. We suspect that this is due to
the slow convergence of the Gauss-Seidel method, rather than our framework
or the optimization formulation (\ref{eq:L1-x-minimization-constrained}).
Algorithm \ref{alg:AlternativeOpt} terminates when either the number
of iterations is already large or the improvement of the objective
function in (\ref{eq:L1-x-minimization-constrained}) is too small
across adjacent iterations. However, we observe that, in all tested
trials after the termination of the algorithm, the objective function
still decreases along the line linking the output $\left\{ \hat{\bm{P}}^{t}\right\} $
and the ground truth $\left\{ \bm{P}^{t}\right\} $. This observation
suggests that the output solution of Algorithm \ref{alg:AlternativeOpt}
is not a local minimal. An algorithm that solves (\ref{eq:L1-x-minimization-constrained})
with faster convergence may significantly improve the estimation accuracy
for unidirectional networks. 

\section{\label{sec:Conclusion}Conclusion and future work}

To handle the ill-posedness of the OD flow estimation problem, this
paper develops a linear forward model based on the O-flows. The dimension
of the model is substantially reduced, and the OD flow information
is preserved. Simulations demonstrate that for the first time blind
estimation is possible. For bidirectional networks, the ground truth
OD flows can be uniquely identified without any prior information.
A necessary condition for the uniqueness of the solution is derived,
which leads to the conclusion that unidirectional networks in general
do not admit a unique solution under the O-flow model. Nevertheless,
with the assumption of transform domain sparsity, the ground truth
OD flows can be estimated in a reasonable accuracy. 

As a starting point, this paper focuses on relatively simple settings.
It will be beneficial to consider nonlinear traffic models, adapt
the algorithm with different types of prior information, and experiment
with large networks and real data. Furthermore, the algorithmic approach
and Matlab implementation are not optimized, resulting in slow running
speed which makes the current implementation not applicable to large
network/data analysis. Efficient algorithm designs and implementations
can benefit future research. 

\bibliographystyle{IEEEtran}
\bibliography{OD_reference}

\end{document}